\newcommand\thickbar[1]{\accentset{\rule{.45em}{.9pt}}{#1}}
\title{Parameterized Complexity of Efficient Sortation}  
\titlerunning{Parameterized Complexity of Efficient Sortation}  
\author{Robert Ganian}{Algorithms and Complexity Group, TU Wien, Austria}{rganian@ac.tuwien.ac.at}{0000-0002-7762-8045}{Austrian Science Fund (FWF) Project 10.55776/Y1329}
\author{Hung P.\ Hoang}{Algorithms and Complexity Group, TU Wien, Austria}{phoang@ac.tuwien.ac.at}{0000-0001-7883-4134}{Austrian Science Fund (FWF) Project 10.55776/ESP1136425}
\author{Simon Wietheger}{Algorithms and Complexity Group, TU Wien, Austria}{swietheger@ac.tuwien.ac.at}{0000-0002-0734-0708}{Austrian Science Fund (FWF) Project 10.55776/Y1329}
 \authorrunning{R.\ Ganian, H.\,P.\ Hoang, S.\ Wietheger}  
\newcommand{\bigO}[1]{{\ensuremath{\mathcal{O}\!\left(#1\right)}}\xspace}
\newtheorem{fact}[theorem]{Fact}
\newcommand{\N}{\mathbb{N}}
\newcommand{\I}{{\cal I}}
\newcommand{\T}{{\cal T}}
\newcommand{\calP}{{\cal P}}
\newcommand{\calQ}{{\cal Q}}
\newcommand{\set}[1]{{\{#1\}}}
\newcommand{\ceil}[1]{{\left\lceil #1 \right\rceil}}
\newcommand{\fpt}{\textup{FPT}\xspace}
\newcommand{\xp}{\textup{XP}\xspace}
\newcommand{\np}{\NP\xspace}
\newcommand{\nphard}{\np-hard\xspace}
\newcommand{\nphardness}{\np-hardness\xspace}
\newcommand{\pathl}{\ensuremath{p}\xspace}
\newcommand{\pvarp}{\textsc{\textup{MD-RSPP}\textsubscript{\textsc{\textup{PL}}}}\xspace}
\newcommand{\pvar}{\textsc{\textup{MD-RSPP}}\xspace}
\newcommand{\pfixed}{\textsc{\textup{MD-SPP}}\xspace}
\newcommand{\pfixedgen}{\textsc{\textup{SMD-SPP}}\xspace}
\newcommand{\ram}{\textup{\textsc{Ram}}}
\newcommand{\threePart}{\textsc{3-Partition}\xspace}
\newcommand{\threeSAT}{\textsc{3-SAT}\xspace}
\newcommand{\threeSATtwo}{\textsc{3-SAT-(2,2)}\xspace}
\newcommand{\tc}{{\cal T}}
\newcommand{\undund}[1]{\mathcal{G}(#1)}
\newcommand{\True}{TRUE\xspace}
\newcommand{\False}{FALSE\xspace}
\newcommand{\myproblem}[3]{
\noindent
\fbox{   \begin{minipage}{\dimexpr\linewidth-2\fboxsep-2\fboxrule\relax}

    #1\par
    \vspace{-3mm}

    \begin{list}{}{       \setlength{\labelwidth}{3.5em}       \setlength{\leftmargin}{\labelwidth}       \addtolength{\leftmargin}{\labelsep}       \setlength{\rightmargin}{.8em} 	  \setlength{\itemsep}{0.5em}      }
      \item[\textbf{Input:}] #2
      \item[\textbf{Task:}] #3
    \end{list}

  \end{minipage} }
}
 \definecolor{cb_orange}{rgb}{0.859 0.427 0.0}
\definecolor{cb_blue}{rgb}{0.0 0.427 0.859}
\definecolor{cb_violet}{rgb}{0.714 0.427 1.0}
\definecolor{cb_dark_seal}{rgb}{0.0 0.286 0.286}
\definecolor{cb_seal}{rgb}{0.0 0.573 0.573}
\definecolor{cb_pink}{rgb}{1.0 0.427 0.714}
\definecolor{cb_rose}{rgb}{1.0 0.714 0.859}
\definecolor{cb_purple}{rgb}{0.286 0.0 0.573}
\definecolor{cb_light_blue}{rgb}{0.427 0.714 1.0}
\definecolor{cb_vlight_blue}{rgb}{0.714 0.859 1.0}
\definecolor{cb_red}{rgb}{0.573 0.0 0.0}
\definecolor{cb_brown}{rgb}{0.573 0.286 0.0}
\definecolor{cb_green}{rgb}{0.141 1.0 0.141}
\definecolor{cb_yellow}{rgb}{1.0 1.0 0.427}
\definecolor{cb2_orange}{rgb}{0.961, 0.475, 0.227}
\definecolor{cb2_purple}{rgb}{0.663, 0.353, 0.631}
\definecolor{cb2_cyan}{rgb}{0.522, 0.753, 0.976}
\definecolor{cb2_blue}{rgb}{0.059, 0.125, 0.502}
\keywords{sort point problem, parameterized complexity, graph algorithms, treewidth}
\begin{document}

\maketitle

 \begin{abstract}
    A crucial challenge arising in the design of large-scale logistical networks is to optimize parcel sortation for routing.
    We study this problem under the recent graph-theoretic formalization of Van Dyk, Klause, Koenemann and Megow (IPCO 2024). The problem asks---given an input digraph $D$ (the \emph{fulfillment network}) together with a set of \emph{commodities} represented as source-sink tuples---for a minimum-outdegree subgraph $H$ of the transitive closure of $D$ that contains a source-sink route for each of the commodities. Given the underlying motivation, we study two variants of the problem which differ in whether the routes for the commodities are fixed or can be chosen arbitrarily.    
    We perform a thorough parameterized analysis of the complexity of both problems, concentrating on three fundamental parameterizations: 
    \begin{enumerate}
    \item When considering the target outdegree of $H$, we show that the problems are para\NP-hard even in highly restricted cases;
    \item When parameterizing by the number of commodities, we utilize Ramsey-type arguments and the color-coding technique to obtain fixed-parameter algorithms for both problems;
    \item When parameterizing by the structure of $D$, we establish fixed-parameter tractability for both problems w.r.t.\ the combined parameterization of treewidth, maximum degree and the maximum routing length. We complement this with lower bounds which  show that omitting any of the three parameters results in para\NP-hardness.
    \end{enumerate}
\end{abstract}

\newpage
 \section{Introduction}\label{sec:introduction}
The task of finding optimal solutions to logistical challenges has motivated the study of a wide range of computational graph problems including, e.g., the classical \textsc{Vertex} and \textsc{Edge Disjoint Paths}~\cite{GanianO21,GanianOR21,GolovachT11,LokshtanovMP0Z20,Lokshtanov0Z20} problems and \textsc{Coordinated Motion Planning} (also known as \textsc{Multiagent Pathfinding})~\cite{DeligkasEGK024,EibenGK23,FioravantesKKMO24,GeftH22,KornhauserMS84,SternSFK0WLA0KB19}. And yet, when dealing with logistical challenges at a higher scale, collision avoidance (the main goal in these three problems) is no longer relevant and one needs to consider different factors when optimizing or designing a logistical network. In this paper, we focus on \emph{parcel sortation}, a central aspect of contemporary large-scale logistical networks which has not yet been thoroughly investigated from an algorithmic and complexity-theoretic perspective.

In the considered setting, we are given an underlying fulfillment network and a set of commodities each represented as a source and destination node. The nodes in the fulfillment network typically represent facilities at various locations, and each commodity needs to be routed from its current facility $s_i$ (e.g., a large warehouse) to a facility $t_i$ in the vicinity of the end customer. However, when routing parceled commodities in the network, each parcel that travels from $s_i$ to $t_i$ via some internal node $u$ must be ``sorted'' at $u$ based on its subsequent downstream node. Hence, if multiple commodities arrive at $u$ and each needs to be routed to a different facility downstream, it is necessary to apply \emph{sortation} at $u$ to subdivide the stream of incoming parcels between the next stops; in large-scale operations $u$ would typically be equipped with a designated \emph{sort point} for each downstream node, and the number of sort points that a node can feasibly have is typically limited. On the other hand, if all the commodities arriving at $u$ were to be routed to the same downstream node, one can avoid the costly sortation step at $u$ (via applying \emph{containerization} at a previous facility and using a process called \emph{cross-docking} at $u$). We refer readers interested in a more detailed description of these processes to previous works on the topic~\cite{Belle2012CrossdockingSO,ChenCHC21,LaraKNS23}. 
  
In their recent article, Van Dyk, Klause, Koenemann and Megow~\cite{VanDyk_2023} have shown that the task of optimizing parcel sortation in a logistical network can be modeled as a surprisingly ``clean'' digraph problem. 
Indeed, if one represents the network as a digraph $D$ and each commodity as $(s_i, t_i, P_i)$ where $P_i$ is an $s_i$-$t_i$-path in $D$, the aim is to find a \emph{sorting network}\footnote{These logistical networks should not be confused with number sorting networks~\cite{AjtaiKS83}.}---a subgraph of the transitive closure of $D$---with minimum outdegree. 
The sorting network captures the information of sort points at each node: in particular, it specifies the downstream nodes that the node has a sort point for.
As we want to control the number of sort points at each node, this translates to the objective of minimizing the outdegree. 
Van Dyk, Klause, Koenemann and Megow~\cite{VanDyk_2023} primarily focused their work on the task of computing the sorting plans when the physical routes of the commodities are fixed. In particular, they established \nphardness even when the underlying undirected graph is a star and obtained several polynomial time algorithms for restricted classes of instances, including an exact algorithm if all commodities start at the same vertex and approximation algorithms for various cases where the underlying network is a tree.

In this work we consider both the lower-level optimization task with fixed physical routes as well as the higher-level optimization task where we can determine the routes and sort points. 
This gives rise to the following two problem formulations, where $\T(D)$ denotes the transitive closure of a directed graph $D$:

\vspace{2mm}
\myproblem{\textsc{Min-Degree Sort Point Problem} (\pfixed)}
{A digraph $D=(V,E)$, a target $T\in \N$ and a set $K$ of routed commodities each of which is a tuple of the form $(s,t,P)$ where $s, t \in V$ and $P$ is an $s$-$t$-path in~$D$.}
{Is there a subgraph $H$ of the transitive closure $\tc(D)$ of $D$ such that the maximum outdegree of $H$ is at most $T$, and for each commodity $(s,t,P) \in K$, there is a directed $s$-$t$-path in $H \cap \tc(P)$?}
\vspace{2mm}

\myproblem{\textsc{Min-Degree Routing and Sort Point Problem} (\pvar)}
{A digraph $D=(V,E)$, a target $T\in \N$ and a set $K\subseteq V\times V$ of commodities.}
{Is there a subgraph $H$ of $\tc(D)$ such that the maximum outdegree of $H$ is at most $T$ and for each commodity $(s,t) \in K$, there is a directed $s$-$t$-path in~$H$?}
\vspace{0.2cm}

Figure~\ref{fig:example_network} provides an example illustrating both problems.\footnote{For complexity-theoretic reasons, here we consider the decision variants. However, all obtained algorithms can solve the corresponding optimization tasks as well.
}
   For \pvar\ the paths of commodities are not fixed, so commodities are defined as 2-tuples. 
 
\begin{figure}[bt]  	\centering
	\includegraphics[page=1, scale=1]{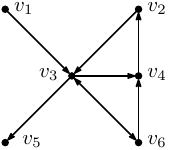}
	\hfill
	\includegraphics[page=2, scale=1]{example_network.pdf}
	\hfill
	\includegraphics[page=3, scale=1]{example_network.pdf}
	\hfill
	\includegraphics[page=4, scale=1]{example_network.pdf}
	\caption{Example network for \pfixed and \pvar. 
	Multi-edges are only for illustration; a solution graph $H$ is in fact simple.	\\
	The left-most image shows an input graph $D$. \\
	The second image visualizes five commodities for \pfixed: $(v_1,v_2, \allowbreak (v_1,v_3,v_6,v_4,v_2)), (v_2,v_5,(v_2,v_3,v_5)), (v_3,v_2,(v_3,v_4,v_2)), (v_3,v_4,(v_3,v_4)),\allowbreak (v_3,v_6,(v_3,v_6))$.\\
	The third image gives a minimum target $(T=2)$ solution graph $H$ to this \pfixed instance.\\
	The right-most image illustrates that for \pvar, by not fixing the paths of commodities, solutions with a smaller target might exist (here $T=1$).	}
	\label{fig:example_network}
	\end{figure}
While both \pfixed\ and \pvar\ could be rather easily shown to be \NP-complete on general graphs, Van Dyk, Klause, Koenemann and Megow showed that the former problem remains \NP-complete even when restricted to orientations of stars~\cite{VanDyk_2023}. In the rest of their article, they then focused on obtaining approximation as well as exact algorithms for \pfixed\ on special classes of oriented trees. Apart from these individual results, the computational complexity of \pvar\ and \pfixed remains entirely unexplored.

  \subsection{Contributions}
The central question tackled by our work is whether the structural properties of inputs can be leveraged to circumvent the \NP-hardness of \pvar\ and \pfixed. To do so, we analyze the problems through the more refined lens of \emph{parameterized complexity}~\cite{CyganFKLMPPS15,DowneyFellows13}. There, the general aim is to design algorithms with running times which are not exponential in the whole input size, but only exponential in some well-defined integer \emph{parameters} of the input. The most desirable complexity class in the parameterized setting is \emph{fixed-parameter tractable} (\FPT), which means that the problem can be solved in time $f(p)\cdot n^{\mathcal{O}(1)}$ where $f(p)$ is a computable function of the parameter $p$ and $n$ is the input size. Naturally, if we establish \NP-hardness for a problem even under the restriction that the parameter $p$ is fixed to a constant, then this immediately rules out fixed-parameter tractability w.r.t.\ $p$; 
 in this case, we say that the problem is para\NP-hard.

 In our work, we consider three natural types of parameterizations for \pvar\ and \pfixed: parameterizing by the target solution quality (i.e., $T$), by the number of commodities, or by the structural properties of the input graph. The article is accordingly split into three parts, one for each of these perspectives.
 
In the first part of our article---Section~\ref{sec:target}---we establish that \pfixed and \pvar remain \NP-hard already when $T=2$. We complement this lower bound by providing a polynomial-time algorithm solving the latter problem in the special case of $T=1$, meaning that for \pvar\ the \NP-hardness result is tight. While the proofs of these results are non-trivial and especially the algorithm requires us to obtain deeper insights into the problem, they do not rely on advanced technical machinery and can thus be seen as introductory. We note that the provided hardness result holds already for orientations of trees with a single source, and complements an earlier additive $(+1)$-approximation of Van Dyk, Klause, Koenemann and Megow for \pfixed\ on this graph class~\cite[Theorem 3]{VanDyk_2023}.

In Section~\ref{sec:commodities}, we turn towards a parameterized analysis of the considered problems w.r.t.\ the number of commodities. 
For \pfixed, we begin by establishing fixed-parameter tractability for the case of $T=1$ via a stand-alone algorithm, as our more involved arguments for the general cases do not seamlessly transfer to this simpler setting.
 We then obtain an involved fixed-parameter algorithm for the ``more general'' setting of $T\geq 2$ by a kernelization subroutine that is based on Ramsey-type arguments, but with a twist: to implement the main reduction rule used here, we need to transition to a more general variant of the problem where some of the vertices do not contribute towards the degree bound $T$.
    
As the algorithm used for \pfixed relies on the fixed paths of the commodities, separate arguments have to be used in order to solve \pvar\ when $T\geq 2$.
Here, we obtain a fixed-parameter algorithm as well, this time by combining a delicate branching routine for solution templates with a color coding approach.  
 It is perhaps worth noting that---in spite of some superficial similarity between the problems---the fixed-parameter tractability of our problems contrasts the known para\NP-hardness of \textsc{Vertex Disjoint Paths} (parameterized by the number of paths) on directed graphs~\cite{FortuneHW80,LopesS22}.

In Section~\ref{sec:structural}, we target graph-structural parameterizations which would yield tractability for arbitrary choices of $T$ and arbitrarily many commodities. Given the previously established \NP-hardness of \pfixed on orientations of stars---and the fact that the same reduction also works for \pvar---one could ask whether parameterizing by treewidth plus the maximum vertex degree (of the underlying undirected graph) suffices; after all, this combined parameterization has already been successfully employed to achieve fixed-parameter algorithms for a number of other challenging problems, including, e.g., \textsc{Edge Disjoint Paths}~\cite{GanianOR21}. Unfortunately, our reductions in Section~\ref{sec:target} already establish the \NP-hardness of both problems of interest even on bounded-degree trees.

As our final contribution, we show that the intractability of both problems can be overcome if one takes the maximum length of any admissible route as a third parameter. In particular, we devise a fixed-parameter algorithm that relies on dynamic programming to solve both \pfixed\ and \pvar\ when parameterized by the treewidth and maximum degree of the input graph (or, more precisely, its underlying undirected graph) plus the maximum length of a route in a solution. We complement this algorithm with reductions which prove that all three of these parameters are necessary: dropping any of the three leads to para\NP-hardness for both problems. 

A summary of our complexity results for the two problems is provided in \cref{table:results}. A careful reader may notice that although the techniques used to obtain our algorithms and lower bounds often vary between the two problems, their complexity under the comprehensive range of parameterizations considered here seems to be the same. We believe this outcome to be rather unexpected, and provide a further discussion in Section~\ref{sec:conc}.

\begin{table}[t]
        		\centering
		
 		\begin{tabular}{cccr}
		\toprule
		 		Parameter & Variant & Complexity & Reference\\
		\hline
		Target $T$ & \pvar & Polynomial time solvable for $T \leq 1$ & Thm.~\ref{thm:fixed_target1} \\
		& both & \nphard for $T\geq 2$ & Thm.~\ref{thm:pvar_pfix_deg_tw} \\
		\hline
		Number $|K|$ 
		 		& \multirow{2}{*}{both} & \multirow{2}{*}{\fpt} & Thm.~\ref{thm:var_commodities_target1}, \ref{thm:fixed_commodities}, \\
		of Commodities  & & & \&~\ref{thm:pvar_commodities_fpt} \\
		\hline
		\multirow{4}{*}{Structural}  & \multirow{4}{*}{both} & \fpt w.r.t.\ degree + treewidth + path length & Thm.~\ref{thm:deg_pathlen_tw}\\
    								 		 & & para\nphard w.r.t.\ degree + treewidth & Thm.~\ref{thm:pvar_pfix_deg_tw}\\
    								 		      										 &  & para\nphard w.r.t.\ treewidth + path length & \cite{VanDyk_2023} \& Fact~\ref{fact:pvar_pfix_pathlen_tw}\\
    										 & & para\nphard w.r.t.\ degree + path length & Thm.~\ref{thm:pvar_deg_pathlen} \&~\ref{thm:pfixed_deg_pathlen}\\
		\bottomrule
    \end{tabular}
  \vspace{-0.15cm}
    \caption{Complexity results for \pfixed and \pvar by different parameterizations. The \emph{degree} and \emph{treewidth} refer to the underlying undirected graph of the input graph $D$. The \emph{path length} is the maximum length of any path fulfilling a commodity. \vspace{-0.35cm}}
    \label{table:results}
\end{table}

  \section{Preliminaries}
For $k \in \N$, we denote by $[k]$ the set $\set{1, \dots, k}$. 

\subsection{Graph Terminology}
We employ standard graph-theoretic terminology~\cite{Diestel}.
For a directed graph $D = (V,E)$, let $\Delta_D^+(v)$ and $\Delta_D^-(v)$ be the sets of out-neighbors and in-neighbors of $v$, respectively, for all $v\in V$.
Let $\delta_D^+(v) = |\Delta_D^+(v)|$, $\delta_D^-(v) = |\Delta_D^-(v)|$, $\delta^+(D) = \max_{v\in V} \delta_D^+(v)$, and $\delta^-(D) = \max_{v\in V} \delta_D^-(v)$. 
We may omit the subscript where $D$ is clear from context.  
 For $u,v \in V$, a path in $D$ from $u$ to $v$ is called a \emph{$u$-$v$-path}.
For two graphs $D, D'$, we say that $D'$ is a subgraph of $D$ and write $D'\subseteq D$ if $V(D')\subseteq V(D)$ and $E(D') \subseteq E(D)$.
The induced subgraph $D[X]$ of a vertex set $X\subseteq V(D)$ is defined by $V(D[X]) = X$ and $E(D[X]) = (X\times X) \cap E(D)$.
 The \emph{transitive closure} of $D$, denoted by $\tc(D)$, is the directed graph on the vertex set $V(D)$, whereas an edge $(i,j)$ exists in $\tc(D)$ if and only if there is a directed path from $i$ to $j$ in $D$.
The \emph{underlying undirected graph} of a directed graph $D$, denoted by $\undund{D}$, is the undirected simple graph obtained from $D$ by replacing each directed edge with an undirected one (all directed edges between a pair of vertices $v$ and $w$ become a single edge $\set{v,w}$).
A \emph{strongly connected component} (SCC) of $D$ is a maximal set of vertices such that every vertex in the set is reachable from all other vertices in the set.

\subsection{Parameterized Complexity}
In parameterized complexity~\cite{CyganFKLMPPS15,DowneyFellows13}, the complexity of a problem is studied not only with respect to the input size, but also with respect to some problem parameter(s). 
The core idea behind parameterized complexity is that the combinatorial explosion resulting from the \NP-hardness of a problem can sometimes be confined to certain structural parameters that are small in practical settings. 
Formal definitions are provided below.

A {\it parameterized problem} $Q$ is a subset of $\Omega^* \times \N$, where $\Omega$ is a fixed alphabet. 
Each instance of $Q$ is a pair $(I, \kappa)$, where $\kappa \in \N$ is called the {\it parameter}. 
A parameterized problem $Q$ is {\it fixed-parameter tractable} (\fpt) if there is an algorithm, called a {\em fixed-parameter algorithm},  that decides whether an input $(I, \kappa)$ is a member of $Q$ in time $f(\kappa) \cdot |I|^{\mathcal{O}(1)}$, where $f$ is a computable function and $|I|$ is the input instance size.  
The class \fpt{} denotes the class of all fixed-parameter tractable parameterized problems.

The class \xp{} contains parameterized problems that can be solved in time  $\bigO{|I|^{f(\kappa)}}$, where $f$ is a computable function.
We say that a parameterized problem is para\NP-hard if it remains \NP-hard even when restricted to instances with a fixed value of the parameter.

\subsection{Treewidth}
A \emph{nice tree decomposition} of an undirected graph $G = (V,E)$ is a pair $(\mathtt{T}, \chi)$, where $\mathtt{T}$ is a tree (whose vertices are called \emph{nodes}) rooted at a node $t_r$ and $\chi$ is a function that assigns each node $t$ a set $\chi(t) \subseteq V$, called a \emph{bag}, such that the following hold:
\begin{itemize}
	\item For every ${u,v} \in E$, there is a node $t$ such that $u,v \in \chi(t)$.
	\item The union over all bags is $V$.
	\item For every vertex $v \in V$, the set of nodes $t$ s.t.\ $v \in \chi(t)$ forms a subtree of $\mathtt{T}$.
	\item $|\chi(\ell)| = 0$ for every leaf $\ell$ of $\mathtt{T}$ and $|\chi(t_r)| = 0$.
	\item There are only three kinds of non-leaf nodes in $\mathtt{T}$:
	\begin{itemize}
		\item \textsf{introduce}: a node $t$ with one child $t'$ such that $\chi(t) = \chi(t') \cup \set{v}$ for some $v \notin \chi(t')$.
		\item \textsf{forget}: a node $t$ with exactly one child $t'$ such that $\chi(t) = \chi(t') \setminus \set{v}$ for some $v \in \chi(t')$.
		\item \textsf{join}: a node $t$ with two children $t_1, t_2$ such that $\chi(t) = \chi(t_1) = \chi(t_2)$.
	\end{itemize}
\end{itemize}

We use $c(t)$ to denote the set of all vertices of $G$ which occur in the bag of $t$ or some descendant of $t$. The width of a nice tree decomposition $(\mathtt{T}, \chi)$ is the size of the largest bag $\chi(t)$ minus 1, and the \emph{treewidth} of $G$ is the minimum width of a nice tree decomposition of~$G$.

\subsection{Problem-Specific Definitions}
For a routed commodity $(s,t,P)$ in an \pfixed instance or a commodity $(s,t)$ in an \pvar instance, we call $s$ its \emph{source} and $t$ its \emph{destination}.
    A graph $H$ satisfying all conditions given in the problem statements is called a \emph{solution graph}. 
Given an \pfixed or \pvar instance $(D,K,T)$ and a commodity $\kappa \in K$, a \emph{witness path} of $\kappa$ is a path in $\tc(D)$ that satisfies the condition of the corresponding problem description for that commodity.
In other words, if $(D,K,T)$ is an \pfixed instance, then a witness path of a commodity $(s,t,P)$ is an $s$-$t$-path in $\tc(P)$.
If $(D,K,T)$ is an \pvar instance, then a witness path of a commodity $(s,t)$ is an $s$-$t$-path in $\tc(D)$.
A \emph{solution path cover} of $(D,K,T)$ is a set of witness paths, one for each commodity in $K$, such that its union is a solution graph of $(D,K,T)$.
           We provide an observation on trivial instances of the problems. 
If the target is at least the maximum outdegree of the input graph, then we are guaranteed to have a YES-instance since the input graph can be trivially used as the solution graph. 
The same also holds if the target is at least the maximum number of commodities starting at any vertex, as there is then a solution by only using the edges $(s,t)$ for all commodities $(s,t)$ or $(s,t,P)$.

\begin{observation}
\label{obs:bounded_target}
	Each \pvar instance with a commodity $(s,t)$ such that $(s,t)\notin\tc(D)$ is a \textup{NO}-instance.
	Apart from these, all instances of \pfixed and \pvar where the target is at least the maximum outdegree of the input graph or the maximum number of commodities starting at the same vertex are \textup{YES}-instances.
	 \end{observation}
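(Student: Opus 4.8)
The plan is to treat the two sufficient conditions separately, and in each case to exhibit an explicit subgraph $H \subseteq \tc(D)$ and then verify the two requirements of the respective problem: the outdegree bound, and the existence of a witness path for every commodity.

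For the first condition, suppose $T \geq \delta^+(D)$. I would simply take $H = D$ itself. Since every edge of $D$ is realized by a length-one path, we have $D \subseteq \tc(D)$, so $H$ is a legitimate subgraph of the transitive closure, and its maximum outdegree is exactly $\delta^+(D) \leq T$. It then remains to produce a witness path for each commodity. For a \pvar commodity $(s,t)$ this is immediate: any $s$-$t$-path in $D$ already lies in $H$ (such a path must exist, since otherwise $(s,t)$ is not an edge of $\tc(D)$ and the instance admits no solution for any target whatsoever). For a \pfixed commodity $(s,t,P)$ the path $P$ lies in $D = H$ by definition, and trivially $P \subseteq \tc(P)$, so $P$ is an $s$-$t$-path in $H \cap \tc(P)$, exactly as required.

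For the second condition, suppose $T$ is at least the maximum number of commodities sharing a common source. Here I would build $H$ directly from the commodities, inserting for each commodity with source $s$ and destination $t$ the single shortcut edge $(s,t)$. Each such edge belongs to $\tc(D)$: in \pfixed this is witnessed by the supplied path $P$, and in \pvar by the reachability of $t$ from $s$ (again implicitly assumed, as otherwise the instance is a trivial NO-instance). The outdegree of any vertex $v$ in this $H$ is bounded by the number of commodities whose source is $v$, hence at most $T$. Finally, the one-edge path $(s,t)$ serves as a witness path for the corresponding commodity; for \pfixed one additionally notes that $(s,t) \in \tc(P)$ since $P$ is an $s$-$t$-path, so the edge indeed lies in $H \cap \tc(P)$.

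Since both constructions yield a graph satisfying all conditions of the respective problem, every instance meeting either hypothesis is a YES-instance. There is no substantive obstacle in this argument; the only point deserving minor care is the witness-path condition for \pfixed, where one must confirm that the chosen path lives inside the restricted closure $\tc(P)$ rather than merely in $\tc(D)$—a check that is immediate in both of the constructions above.
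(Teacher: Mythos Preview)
Your proposal is correct and follows essentially the same approach as the paper: take $H=D$ for the first condition and $H=\{(s,t)\mid (s,t)\text{ or }(s,t,P)\in K\}$ for the second. Your write-up is in fact more careful than the paper's, as you explicitly verify the $H\cap\tc(P)$ requirement for \pfixed and flag the implicit feasibility assumption on commodities in \pvar.
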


\section{Complexity Classification with Respect to the Target}
 \label{sec:target}

Our first---and in a sense introductory---technical section is dedicated to the study of \pfixed\ and \pvar\ when parameterized by the value of the target $T$. 
We begin by noting that both problems of interest are trivially solvable in linear time if $T=0$. 
 Next we consider the case of $T=1$ and have the following tractability result for \pvar.

\begin{theorem}
\label{thm:fixed_target1}
	\pvar is polynomial time solvable when restricted to instances with target $T\leq 1$.
\end{theorem}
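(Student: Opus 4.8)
The plan is to handle the trivial case $T=0$ first---where $H$ must be edgeless, so the instance is a YES-instance precisely when every commodity satisfies $s=t$---and then concentrate on $T=1$. For $T=1$ the condition ``maximum outdegree at most $1$'' means that a candidate solution $H$ is exactly the functional graph of a partial map $f\colon V \rightharpoonup V$, where $f(v)$ is the unique out-neighbour of $v$ (when it exists). Since $H \subseteq \tc(D)$, admissibility of the arc $(v,f(v))$ is equivalent to $f(v)$ being reachable from $v$ in $D$; and because $\tc(D)$ contains a direct edge to \emph{every} reachable vertex, a witness path for a commodity $(s,t)$ in $H$ exists if and only if $t$ lies on the forward orbit $s, f(s), f^2(s), \dots$ of $s$ under $f$. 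Thus I would first reformulate the problem as: \emph{decide whether there is a partial map $f$ with $f(v)$ reachable from $v$ in $D$ such that $t$ belongs to the forward orbit of $s$ for every commodity $(s,t)\in K$.}

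Before giving the algorithm I would record two structural observations that tame the search space. First, any directed cycle of $H$ consists of mutually reachable vertices and hence lies entirely within a single strongly connected component of $D$; across SCCs, $f$ strictly descends in the reachability preorder $\preceq$ (where $u\preceq w$ iff $w$ is reachable from $u$ in $D$). Second, since $\tc(D)$ lets us jump directly to any downstream vertex, routing through a vertex that is neither a source nor a destination of a commodity never helps: such an intermediate can always be short-cut. Consequently I would argue that it suffices to search for an $f$ supported on the set $X$ of \emph{terminals} (all sources and destinations), with $|X|\le 2|K|$, and with each arc respecting the precomputed reachability relation on $X$.

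The algorithm itself I would phrase as a monotone constraint-propagation (fixpoint) over requirements of the form ``the orbit of $v$ must contain $t$''. Initialising these requirements from $K$, the driving rule is that whenever $v$ must reach two destinations $t \prec t'$ (so $t$ is necessarily visited before $t'$), the suffix of $v$'s route starting at $t$ must also reach $t'$, i.e.\ $t'$ becomes a requirement of $t$. Propagation rejects the moment some terminal accumulates two \emph{incomparable} required destinations, since a single forward orbit can only visit a set whose elements are pairwise comparable under $\preceq$; otherwise it stabilises after at most $|X|^2$ rounds, and I would then build a solution by letting each active terminal point to its $\preceq$-earliest outstanding required destination (orienting the remaining freedom inside each SCC arbitrarily). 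All steps run in polynomial time.

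The main obstacle is the correctness proof, and specifically its completeness direction: I must show that whenever propagation reports a conflict the instance is genuinely infeasible. The clean part is soundness---verifying that the constructed $f$ respects reachability (minimality guarantees that each pushed requirement is downstream of its new host) and that every orbit covers the required destinations. The delicate part is an exchange argument establishing that \emph{every} hypothetical solution is forced to satisfy all propagated requirements (because a destination that must be reached forces the entire orbit-suffix past it to be shared), together with the careful treatment of within-SCC requirements, where the interplay between covering several terminals inside one SCC and still having to exit it downstream---reconciled using the fact that an SCC induces a complete digraph in $\tc(D)$---is where the real insight of the proof lies.
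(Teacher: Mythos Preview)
Your high-level strategy---reduce to terminals, then reason about the functional graph via a reachability preorder and constraint propagation---is a genuinely different route from the paper's, and parts of it are sound: the short-cutting argument restricting $H$ to sources and destinations is correct, and for \emph{acyclic} $D$ your propagation-then-greedy construction can be made to work. But there is a real gap exactly where you flag the difficulty, and it is not just a matter of filling in detail.

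Concretely, take $D$ on vertices $s,a,b,c$ where $\{a,b\}$ form an SCC, $s$ reaches both $a$ and $b$, and $a,b$ both reach $c$ (but not conversely). Commodities: $(s,a)$, $(s,b)$, $(a,c)$. With \emph{strict} $\prec$ your rule propagates nothing between $a$ and $b$, so $R(s)=\{a,b\}$, $R(a)=\{c\}$, $R(b)=\emptyset$, no incomparable pair arises, and you do not reject. Your construction ``point to the $\preceq$-earliest requirement'' may set $f(s)=a$, $f(a)=c$, giving the orbit $s\to a\to c$, which misses $b$---yet the instance is a YES-instance ($s\to a\to b\to c$). If instead you read $\prec$ non-strictly, you propagate both $b\in R(a)$ and $a\in R(b)$, and the greedy rule produces the $2$-cycle $a\leftrightarrow b$ that never exits to $c$. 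Either reading fails. The phrase ``orienting the remaining freedom inside each SCC arbitrarily'' does not specify an algorithm: you must decide which terminals inside an SCC are chained, in what order, and which one carries the unique out-edge leaving the SCC, and these choices interact globally across commodities sharing that SCC.

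The paper avoids this by working not with SCCs of $D$ but with SCCs of the \emph{commodity graph} $(V(D),K)$: each such SCC is materialised as an explicit cycle in $H$, and the core of the argument is a cycle-\emph{merging} procedure (triggered whenever a commodity leaves a cycle, or whenever two out-edges of a processed vertex force their heads into a common cycle), followed by a topological sweep of the acyclic remainder. Correctness is maintained via the invariant that every edge currently in $H$ is witnessed by a path in \emph{some} optimal solution---the same exchange flavour you anticipate, but anchored to an algorithm that explicitly manages intra-SCC order and exit points. Your plan would need a comparably explicit mechanism before the completeness direction can go through.
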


\begin{proof}
	Define $\I := (D,K,1)$.
	We assume that for any commodity $(s,t) \in K$, there is an edge $(s,t)$ in $\tc(D)$, because otherwise, we can trivially conclude that $\I$ is a NO-instance.
	
	Our algorithm decides the problem by outputting NO for NO-instances and giving a solution graph to $\I$, otherwise.
	For a directed graph $H$, this algorithm uses a subroutine $\textsc{MergeCycles}(a,b)$, that is defined on two vertices $a$ and $b$ in $V(H)$, where $a$ is in exactly one cycle~$C$ in $H$.
	First, it removes the edges of cycle $C$ from $H$.
	If $b$ is not in a cycle, it adds the edges of an arbitrary cycle that spans $b$ and all vertices in $C$.
	Otherwise, let $b$ be part of a cycle $C'$. Delete the edges of $C'$ and add the edges of a cycle that spans the vertices of both $C$ and $C'$.
	
	The algorithm proceeds in three phases.
	In Phase 1, we start by constructing the graph~$H = (V(D),K)$, that is, the graph on the vertices of $D$ with an edge $(s,t)$ for every commodity $(s,t) \in K$.
	Then for every SCC $S$ of size at least two in $H$, we remove all edges between any two vertices of $S$ and add a cycle that visits all vertices of $S$ in an arbitrary order.
	By the definition of SCCs, it is easy to see that after this step, all cycles in $H$ are exactly the cycles that we just added, and each vertex is contained in at most one cycle.

	In Phase 2, as long as there is a commodity $(s,t)$ and a cycle $C$ in $H$ such that $s \in C$ and $t \notin C$, we select such a commodity. 
	If $(t,s)$ is not an edge in $\tc(D)$, output NO.  
	Otherwise, perform the subroutine $\textsc{MergeCycles}(s,t)$.
	 	
	In Phase 3, let $\thickbar{V}$ be the set of vertices in all cycles in $H$ at the end of Phase 2.
	Then the graph $H - \thickbar{V}$ obtained from $H$ by removing the vertices in $\thickbar{V}$ and their incident edges is an acyclic graph.
	Obtain a topological ordering of the vertices of $H - \thickbar{V}$.
	Let $W$ be a set of \emph{visited} vertices and let initially $W=\emptyset$.
	As long as there is an unvisited vertex in $H - \thickbar{V}$, do the following. 
	Select the earliest vertex $s$ in the ordering such that $s\notin W$ and add $s$ to $W$.
	As long as $s$ has two out-edges $(s,t)$ and $(s,t')$ in $H$, consider the following three cases.
	\begin{itemize}
		\item \textbf{Case 1:} \textit{Both $t$ and $t'$ are in $\thickbar{V}$.}
		If $(t,t')$ and $(t',t)$ are both in $\tc(D)$, then perform $\textsc{MergeCycles}(t,t')$ and remove $(s,t')$.
		Otherwise, output NO.
		\item \textbf{Case 2:} \textit{Exactly one of $t$ and $t'$ is in $\thickbar{V}$.}
		Without loss of generality, suppose $t$ is in $\thickbar{V}$, and $t'$ is not.
		If $(t',t)$ is in $\tc(D)$, then add the edge $(t',t)$ to $H$ and remove $(s,t)$. 
		Otherwise, output NO.
		\item \textbf{Case 3:} \textit{Both $t$ and $t'$ are not in $\thickbar{V}$.}
		Without loss of generality, we assume $t$ precedes $t'$ in the ordering.
		Perform the first subcase that applies:
		\begin{itemize}
			\item \textbf{Subcase 3a:} \textit{Both $(t,t')$ and $(t',t)$ are not in $\tc(D)$.} Output NO.
			\item \textbf{Subcase 3b:} \textit{$(t,t')$ is in $\tc(D)$.} Add the edge $(t,t')$ to $H$ and remove $(s,t')$.
			\item \textbf{Subcase 3c:} \textit{$(t',t)$ is in $\tc(D)$.} Add the edge $(t',t)$ to $H$, and remove $(s,t)$. 
			Compute a new topological ordering of the current $H - \thickbar{V}$ that starts with the vertices in $W$.
		\end{itemize}
	\end{itemize}	

	We first show that when we output NO, $\I$ is indeed a NO-instance, and at any point the execution of the algorithm, the following invariants hold:
	\begin{enumerate}
		\item $H$ is a subgraph of $\tc(D)$.
		\item There is an $s$-$t$-path in $H$, for every commodity $(s,t) \in K$.
		\item If $\I$ is a YES-instance, there is a solution graph $H^*$ such that for every edge $(s,t)$ in $H$, there is an $s$-$t$-path in $H^*$.
		\item Only applicable for Phase 3, $H - \thickbar{V}$ is acyclic and has a topological ordering that starts with the vertices in $W$.
		 	\end{enumerate}
	
	In Phase 1, at initialization, Invariants 1--3 hold by construction and by the definition of SCCs. 
	Note that replacing each SCC by a cycle does not change the validity of these invariants.

	In Phase 2, we prove the invariants by induction with the base case being the graph $H$ obtained right after Phase 1.
   	By the inductive hypothesis, Invariant 3 implies that the vertices in $C$ are part of one SCC in $H^*$ or $\I$ is a NO-instance. Since the maximum outdegree is one, this SCC needs to be a cycle and has no out-edges to vertices outside the SCC. 
	As $(s,t)\in H$, Invariant~3 now implies that $t$ needs to be part of that cycle, which is only possible if $(t,s)$ is an edge in $\tc(D)$. Otherwise we correctly output NO.
	If we merge, Invariant 2 follows immediately and Invariant 3 holds as we just established that the vertices in $C$, $t$, and, if applicable, the previous cycle $C'$ in $H$ that contains $t$ share an SCC in $H^*$.
	Note that $(s,t)\in K$ implies $(s,t)\in E(\tc(D))$, so if $(t,s)$ is an edge in $\tc(D)$, then $s$ and $t$ share an SCC in $D$ and thus the merging preserves Invariant 1.

	For Phase 3, we also prove by induction. Note that Invariant 4 holds in the beginning of Phase 3 by construction. For the induction step, we consider each case separately as follows.
	
	In Case 1, using a similar argument as in Phase 2, we conclude that we have a NO-instance or $t$ and $t'$ have to be in the same cycle in a solution graph $H^*$.
	Hence, if one of $(t,t')$ and $(t',t)$ is not in $\tc(D)$, $\I$ is a NO-instance.
	Otherwise, the vertices in the original cycles containing $t$ and $t'$ are in the same SCC of $D$, and hence, $H$ is still a subgraph of $\tc(D)$ after merging, implying Invariant 1.
 	Invariant 3 holds as all merged vertices share an SCC in $H^*$.
	Since $\textsc{MergeCycles}$ does not reduce reachability and $t'$ can be reached from $s$ via a path to $t$ followed by a path in the cycle that contains both $t$ and $t'$, Invariant 2 holds.
	Since we do not add any edges incident to any vertices in $V(D) \setminus \thickbar{V}$, Invariant 4 still holds.

	In Case 2, Invariant 3 and $T= 1$ imply that we have a NO-instance or a solution graph $H^*$ contains a path from $s$ that visits $t$ and $t'$.
	If this path visits $t'$ before $t$, then $(t',t)$ has to be in $\tc(D)$.
	If it visits $t$ before $t'$, then as argued in Phase 2, $t'$ has to be in the same cycle as $t$, and hence, $(t',t)\in E(\tc(D))$.
	Therefore, if $(t',t)$ is not an edge in $\tc(D)$, then $\I$ is a NO-instance.
	Otherwise, adding the edge of $(t',t)$ keeps $H$ a subgraph of $\tc(D)$ (Invariant 1).
	We remove one edge $(s,t)$ and at the same time add the edge $(t',t)$ to maintain the reachability from $s$ to $t$. 
	Therefore all commodities remain satisfied (Invariant 2).
	The only new edge is $(t',t)$. 
 	We have established that $H^*$ contains a $t'$-$t$-path, as a subpath of an $s$-$t$-path or part of a cycle (Invariant 3).
	 	Further, as $t\in \thickbar{V}$, the added edge $(t',t)$ is not in the graph $H-\thickbar{V}$, so Invariant 4 still holds.
	
	In Case 3, using similar arguments as above, for YES-instances Invariant 3 implies the existence of a solution graph $H^*$ that contains both an $s$-$t$- and an $s$-$t'$-path.
	If both $(t,t')$ and $(t',t)$ are not in $\tc(D)$ no such graph exists and $\I$ is a NO-instance (Subcase 3a).
	In the other subcases, since we only add edges that are in $\tc(D)$ and ensure that reachability is not reduced, Invariants 1 and 2 hold.
	In Subcase 3b, since $t$ precedes $t'$ in the ordering, adding $(t,t')$ keeps $H - \thickbar{V}$ acyclic, and the current ordering remains a topological ordering $H - \thickbar{V}$ (Invariant 4).
	In Subcase 3c, since we skip Subcase 3b, $(t,t')$ is not in $\tc(D)$.
	By Invariant 1, this implies that there is no $t$-$t'$-path in $H$, and hence, after adding $(t',t)$, $H - \thickbar{V}$ is still acyclic.
	Further, no edges to $W$ were added, so the new acyclic graph has a topological sorting that starts with $W$ (Invariant 4).
	For YES-instances, consider the solution graph $H^*$ as before entering Case 3.
	 	In Subcase 3c, the only new edge in $H$ is $(t',t)$. 
	There is a path in $H^*$ that starts at $s$ and visits both $t$ and $t'$.
	As $(t,t')$ is not an edge in $\tc(D)$, this path visits $t'$ before $t$ and thus contains a $t'$-$t$-path as a subpath (Invariant 3).
	 	In Subcase 3b, the only new edge in $H$ is $(t,t')$. If $H^*$ contains a $t$-$t'$-path, Invariant 3 immediately follows.
	Otherwise, by a reasoning analogous to Subcase 3c, $H^*$ contains an $s$-$t$-path that visits $t'$.
	Let $P$ be its $t'$-$t$-subpath. 
	Let $U=u_1,\ldots,u_\ell,t$ be the subsequence of $P$ containing all vertices that precede or equal $t$ in the topological ordering and $U'=t'u_1',\ldots,u_{\ell'}'$ be the remaining subsequence.
	 	We now construct a new solution graph $\tilde{H}^*$ by changing the edges in $H^*$ to move the subsequence $U$ right before $U'$. 
	Formally, all edges stay the same as in $H^*$ apart from the following changes.
	Replace every edge $(v,p)$ with $v\notin P, p\in P$ by an edge $(v,u_1)$. 
	All vertices within the subsequences $U,U'$ lose their out-edge and are instead connected to their successor within the subsequence, if there is any.
	Add the edge  $(t,t')$ to ${\tilde{H}^*}$ and, if there is an edge $(t,v)$ replace it by $(u'_{\ell'},v)$.
 	Because $H^*\subseteq \tc(D)$ and by assumption of the subcase $(t,t')\in \tc(D)$, all the vertices in $P$ share a SCC in $\tc(D)$. 
	Hence, these changes preserve $\tilde{H}^*$ being a subgraph of $\tc(D)$.
	Further note that the changes preserve the maximum outdegree of one. 
	 	Observe that for every $v$-$w$-path in $H^*$ there is a $v$-$w$-path in $\tilde{H}^*$ except if $v\in U'$ and $w\in U$. 
	As then $w$ precedes $v$ in the topological sorting of $H$, there is no edge $(v,w)$ in $H$.
	As further $\tilde{H}^*$ contains the edge $(t,t')$, we have that $\tilde{H}^*$ has a $v$-$w$-path for every edge $(v,w)$ in $H$.
	Combining this insight with Invariant 2 gives that $\tilde{H}^*$ satisfies every commodity in $K$, making $\tilde{H}^*$ a solution graph.
	Hence, $\tilde{H}^*$ witnesses the correctness of Invariant 3.

	Next, note that after Phase 2, all vertices in $\thickbar{V}$ have outdegree one in $H$.
	In Phase 3, we do not increase the outdegree of any vertex in $\thickbar{V}$ and after a vertex in $H \setminus \thickbar{V}$ is processed, it also has outdegree one.
	Hence, when Phase 3 is complete without outputting NO, $H$ has maximum outdegree one.
	Combined with Invariants 1 and 2 above, this implies that $H$ is then a solution graph to $\I$.

	Lastly, we argue that the algorithm terminates.
	At every iteration in Phase 2, we reduce the number of SCCs by one.
	In every case in Phase 3, we remove one out-edge of the current vertex and we do not change or add any out-edges to vertices earlier in the ordering.
	Hence, the algorithm terminates.	
 	This completes the proof of the theorem.
\end{proof}

In contrast to the result above, we show that both \pvar and \pfixed are para\NP-hard when parameterized by $T$. This complements the earlier additive $(+1)$-approximation of Van Dyk, Klause, Koenemann and Megow on orientations of trees with a single source~\cite[Theorem 3]{VanDyk_2023}.

To show this, we reduce from the strongly \nphard \threePart problem \cite{Garey_Johnson_1979}.

\vspace{2mm}
\myproblem{\threePart}
	{\(m,B\in \N\) and \(3m\) positive integers \(n_1,\ldots,n_{3m}\) such that \(\sum_{i=1}^{3m} n_i = mB\) and $\frac{B}{4} < n_i < \frac{B}{2}$ for all $i\in[3m]$.}
	{Can the integers \(n_i\) be partitioned into \(m\) triples such that each triple sums up to \(B\)?}

In particular, \threePart\ remains strongly \NP-hard if all integers $n_1,\ldots,n_{3m}$ are distinct \cite{hulett_multigraph_2008}. 

\begin{theorem}
\label{thm:pvar_pfix_deg_tw}
	\pfixed and \pvar are \nphard, even when restricted to instances where $T\leq 2$, there is a single source, and $\undund{D}$ is a tree of maximum degree at most $4$.	
\end{theorem}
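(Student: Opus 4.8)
The plan is to reduce from \threePart, which is strongly \nphard and remains so even when the $3m$ integers $n_1,\dots,n_{3m}$ are pairwise distinct. Because the problem is \emph{strongly} \nphard, the integers are effectively given in unary, so I may build gadgets whose sizes are polynomial in $B$ and the $n_i$ and still obtain a polynomial-time reduction. The high-level idea is to encode each bin as a directed ``capacity path'' of length $B$ and each item $n_i$ as a structure that must route a block of length $n_i$ into a single bin. Placing an item into a bin corresponds to covering a length-$n_i$ sub-interval of that bin's path by transitive-closure shortcut edges, and the requirement $\sum_i n_i = mB$ together with $B/4 < n_i < B/2$ then forces a valid partition into $m$ triples precisely when the bins are exactly tiled (the size bounds rule out bins of two or four items). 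The target $T=2$ is what makes this tight: one out-slot per vertex carries the ``through'' flow along a path, while the single remaining out-slot is the only freedom available to branch an item toward a chosen bin, so the degree budget can be met if and only if the items pack without overlap. This also explains why hardness is sought for $T\le 2$ rather than $T=1$, the latter being polynomial-time solvable for \pvar by Theorem~\ref{thm:fixed_target1}.

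To keep $\undund{D}$ a tree of maximum degree at most $4$, I would route all items to the bins through a bounded-degree ``distribution backbone'' rather than a single high-degree hub (the latter is what the star reduction of Van Dyk et al.\ relies on, and is exactly what must be avoided here). Concretely, the backbone is a directed path of binary \emph{routing vertices}; each such vertex carries one incoming and one outgoing backbone edge plus at most two pendant attachments toward an item gadget or a bin gadget, which keeps every degree at most $4$, and since the construction is a path with attached paths it contains no undirected cycle. A crucial point is that in a tree orientation the underlying undirected path between any two vertices is unique, so for a commodity $(s,t)$ every witness path in $\tc(D)$ is a monotone chain of shortcuts along that one underlying path; consequently the route of each commodity is structurally forced up to the choice of shortcuts, which is why the \emph{same} construction simultaneously yields hardness for \pfixed\ (take the unique underlying path as the fixed $P$) and for \pvar.

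I would then prove the two directions of correctness. For the forward direction, from a valid partition I build the solution graph $H$ by tiling each bin path with the three item-intervals of its triple and routing the items through the backbone using only their reserved out-slots, and verify directly that every vertex has outdegree at most $2$ and every commodity is satisfied. The hard direction---from a solution graph $H$ of maximum outdegree at most $2$ to a valid partition---is where I expect the main difficulty: I must show the degree budget cannot be ``cheated'', i.e.\ that with only one free out-slot per vertex the items can neither share nor split bin capacity, so that the used shortcuts partition each bin path into intervals whose lengths are exactly the sizes of the items assigned there. Establishing this rigidity---ruling out reuse of backbone edges, partial coverings, or an item spread across several bins, all while $\tc(D)$ offers many ``extra'' shortcut edges---is the technical heart of the reduction, and the gadget must be engineered (via auxiliary commodities that consume the intermediate vertices' out-slots) so that any outdegree-$2$ solution is forced into the intended interval-tiling behaviour.
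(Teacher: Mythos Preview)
Your proposal is correct and aligns with the paper's approach: both reduce from \threePart with distinct integers onto an oriented tree of maximum degree~$4$ with $T=2$, encode items as length-$n_i$ subpaths and bins as length-$B$ blocks of destinations, exploit that the unique undirected path in a tree makes the same construction work for both \pfixed and \pvar, and identify the backward direction (showing the degree-$2$ budget forces an exact tiling) as the crux. The paper's concrete realization places the $S$, $W$ (items), and $R$ (bins) sections consecutively on a single directed main path with pendant leaves consuming out-slots---a clean instantiation of exactly the plan you outline.
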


\begin{proof}
	We give a reduction from \threePart with distinct positive integers to \pvar.
	The same construction holds for \pfixed as well by assigning each commodity its unique path in the constructed graph.

	We construct $D$ as follows.
	Let there be a path of $2mB + 2m$ vertices. 
	Let $S$ be the set of the first $2m$ vertices with labels $s_1,s'_1,\ldots,s_m,s'_m$ (in the order they appear in the path).
	The next $mB$ vertices form the set $W$ and are indexed by the numbers $n_1,\ldots, n_{3m}$, specifically $w_1^1,\ldots,w_1^{n_1},\ldots,w_{3m}^1,\ldots,w_{3m}^{n_{3m}}$.
	Among the remaining $mB$ vertices, which form the set $R$, there are $B$ vertices for each triple to be created, that is, $r_1^1,\ldots,r_1^{B},\ldots,r_m^1,\ldots, r_m^B$.	
	For all $i\in [m]$, create the commodities $(s_i,s'_i)$ and $\set{(s_i, r_i^j)\mid j\in [B]}$.
	For all $i\in [3m]$, create the commodities $\set{(w_i^j,w_i^{j+1})\mid j\in [n_i -1]}$.
	For all $i\in [3m]$, create a vertex $\tilde{w}_i$, and add the edge $(w_i^{n_i},\tilde{w}_i)$ and a commodity $(w_i^{n_i},\tilde{w}_i)$.
	For all $i\in [m]$ and $j\in [B]$, create vertices $\tilde{r}_i^j$ and $\thickbar{r}_i^j$, and add the two edges and two commodities $(r_i^j,\tilde{r}_i^j)$ and $(r_i^j,\thickbar{r}_i^j)$.
	The resulting graph $D$ is depicted in \cref{fig:pvar_pfix_deg_tw}.

	\begin{figure}[bt]  		\centering
		\includegraphics{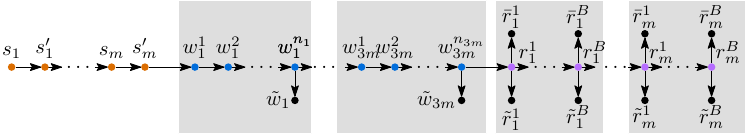}
		\caption{Constructed graph $D$ for the proof of~\cref{thm:pvar_pfix_deg_tw}.
		Vertices in $S$ are colored \textcolor{cb_orange}{orange}, those in $W$ are \textcolor{cb_blue}{blue} and those in $R$ are \textcolor{cb_violet}{violet}.
		}
		\label{fig:pvar_pfix_deg_tw}
		\end{figure}

	Then the \threePart instance is a YES-instance if and only if there is a solution to \pvar in $D$ with target 2.
	Intuitively, for each $i\in[m]$ there are three available edges from $s_i$ and $s'_i$ that can be used to select three subpaths in the middle section, representing three integers. Each subpath $w_j^1,\ldots,w_j^{n_j}$ allows for $n_j$ outgoing edges, thus the subpaths have to be chosen such that their sum is at least $B$ to satisfy the commodities from $s_i$ to the last part of the path.

	Formally, suppose the \threePart instance is a YES-instance. 
	For all $i\in [m]$, let the $i^\text{th}$ triple in the solution be $(n_{\star(i,1)},n_{\star(i,2)},n_{\star(i,3)})$.
	We define
	\begin{align*}
		E_S &:= \set{(s_i,s'_i)\mid i\in[m]},\\
		E_R &:= \set{(r_i^j,\tilde{r}_i^j), (r_i^j,\thickbar{r}_i^j)\mid i\in[m],j\in[B]}.
	\intertext{For all $i\in [3m]$ we define}
		E_W^i &:= \set{(w_i^j,w_i^{j+1})\mid j\in [n_i -1]} \cup \set{(w_i^{n_i},\tilde{w}_i)},
	\intertext{and for all $i\in[m]$ we define}
		E_C^i &:= \set{(s_i, w_{\star(i,1)}^1), (s'_i, w_{\star(i,2)}^1), (s'_i, w_{\star(i,3)}^1)},\\
		E_R^i &:= \set{(w_{\star(i,1)}^j, r_i^j) \mid j\in[n_{\star(i,1)}]} \cup 
				 \set{(w_{\star(i,2)}^j, r_i^{j+n_{\star(i,1)}}) \mid j\in[n_{\star(i,2)}]}\\
				 &\hphantom{= \{} \cup 
				 \set{(w_{\star(i,3)}^j, r_i^{j+n_{\star(i,1)}+n_{\star(i,2)}}) \mid j\in[n_{\star(i,3)}]}.
	\intertext{Finally,}
		E_H &:= E_S \cup E_R \cup \bigcup_{i\in [3m]} E_W^i \cup \bigcup_{i\in [m]} E_C^i \cup \bigcup_{i\in [m]} E_R^i.
	\end{align*} 
	Consider the graph $H$ with edges $E_H$. Then $\delta^+(H) = 2$ as in fact $\delta^+(v) = 2$ for each of the $2mB+2m$ vertices $v$ along the main path and $\delta^+(v) = 0$ for all other vertices.
	Observe that $E_R$ and all $E_W^i$ satisfy all commodities starting in vertices of $R$ and $W$, respectively, and $E_S$ satisfies all commodities with both endpoints in $S$.
	The remaining commodities start at the vertices $s_1,\ldots,s_m$ and each lead to a set of $B$ vertices in $R$, such that the sets are pairwise disjoint.
	For $i\in [m]$, note that $s_i$, by also using $s'_i$, is connected to the first vertex of three subpaths of vertices in $W$, that the total length of these subpaths is exactly $B$, and that each of the vertices in the subpaths connects to a distinct destination required for $s_i$. 
	Thus, these remaining commodities are satisfied as well, and $H$ is a solution to the \pvar instance.
	
	Now suppose there is a solution graph $H$ for the \pvar instance with target 2.
	First note that for all $i\in[3m]$ we have $E_S, E_R, E_W^i \subseteq E(H)$, as otherwise there would be unsatisfied commodities.
	By $E_R \subseteq E(H)$ alone, we have that each vertex in $R$ has outdegree 2, and they can have no other edges in $E(H)$. 
	As each $t\in R$ is the destination of one commodity, this implies that there have to be $|R|=mB$ individual edges from $V\setminus R$ to $R$ in $E(H)$. 
	By construction of $D$, these edges have to start in $S$ or $W$ as otherwise they would not be in $\tc(D)$.
	Each of the $mB + 2m$ vertices in $S\cup W$ is allowed 2 out-edges, out of which $E_S$ and the $E_W^i$ already use $mB + m$. 
	This leaves another $mB + 3m$ edges, out of which $mB$ have to go to $R$.
	We will show below that the remaining $3m$ edges have to end in the first vertex in each of the $3m$ subpaths in $W$, that is, in each of $w_1^1,\ldots,w_{3m}^1$.
	In total, the entire available outdegree of $S$ and $W$ is accounted for. 

	Suppose one of the $3m$ edges would not connect to a subpath. This would save 1 edge that can be connected to $R$, but now there is $j\in[3m]$ such that there is no connection from $S$ to any $w_j^1,\ldots,w_j^{n_j}$. As $n_j$ of the edges in our calculation start in $w_j^1,\ldots,w_j^{n_j}$ and can then not be used for fulfilling commodities, this would reduce the total number of available edges by $n_j - 1$. As $n_j > \frac{B}{4}$ and we can assume $B\ge 4$ (otherwise the \threePart instance becomes trivial), this reduces the amount of available edges by at least 1 and there are no longer enough edges to satisfy all commodities ending in $R$.
	The same issue arises when connecting to some vertex $w_i^j$ instead of $w_i^1$ for some $i\in[3m], j > 1$.
	Hence, the remaining $3m$ edges end in the vertices $w_1^1,\ldots,w_{3m}^1$.
	In particular, this implies that there are no $a,b\in [m], a\neq b$ such that $s_a$ and $s_b$ are connected to the same subpath $w_j^1,\ldots,w_j^{n_j}$ in $H$. 

	We now argue that $s_i$ is connected to exactly three subpaths $w_j^1,\ldots,w_j^{n_j}$ in $H$ and that the total length of these subpaths is $B$. 
	We then have a solution to \threePart where the $i^\text{th}$ triple is $n_a,n_b,n_c$ where $s_i$ is (not necessarily directly) connected to $w_a^1,w_b^1,$ and $w_c^1$. 
	Note that we just established that these triples are pairwise disjoint as no two vertices $s_a,s_b,a\neq b$ connect to the same subpath.
	Consider any $i\in [m]$ and the remaining three outgoing edges of $s_i$ and $s'_i$.
	As argued above, they either end in $R$ or distinct vertices in $W_1 = \set{w_1^1,\ldots,w_{3m}^1}$.
	If $s_i$ is connected to no subpath, the three edges cannot cover all commodities of $s_i$.
	Thus, there is $a\in [3m]$ such that $(s_i, w_a^1)$ or $(s'_i,w_a^1)$ is in $E(H)$.
	As this used one edge and $w_a^1,\ldots,w_a^{n_a}$ has $n_a$ outgoing edges, we now have $n_a +2$ edges available. As we still assume $B \ge 4$ and have $n_a < \frac{B}{2}$, this does not suffice.
	Thus, there is $b\in [3m]$ such that one of the $n_a +2$ available edges ends in $w_b^1$.
	Now, there are $n_a + n_b + 1$ edges available. 
	However, recall that $n_a\neq n_b$, so we have $n_a + n_b -1 < \frac{B}{2} + \frac{B}{2} - 1 + 1 = B$, so these edges still do not suffice.
	Thus, there is $c\in [3m]$ such that one of the $n_a +n_b+1$ available edges ends in $w_c^1$.
	This gives $n_a + n_b + n_c$ edges available for connecting to $r_i^1,\ldots,r_i^B$.
	As for all $i\in m$ we have that $s_i$ is connected to at least 3 distinct vertices in $W_1$ and $|W_1|=3m$ we have that there is no $s_i$ that is connected to more than 3 vertices in $W_1$.
	Hence, each $s_i$ is associated with a triple $n_a,n_b,n_c$, has exactly $n_a+n_b+n_c$ edges available to connect to its $B$ distinct destinations in $R$.
	Thus, for each $s_i$ we have $n_a+n_b+n_c \ge B$, and as $\sum_{j\in[3m]} n_j = mB$ we have $n_a + n_b + n_c = B$, yielding a solution to the \threePart instance.
\end{proof} 

\section{Parameterizing by the Number of Commodities}
\label{sec:commodities}

This section is dedicated to the study of the considered parcel sortation problems when parameterized by the number of commodities. This line of enquiry can be seen as analogous to the fundamental questions that have been investigated for other prominent examples of problems typical for logistical networks, such as the study of \textsc{Vertex} and \textsc{Edge Disjoint Paths} parameterized by the number of paths or of \textsc{Coordinated Motion Planning} parameterized by the number of robots. We remark that the former has been shown to be fixed-parameter tractable in Robertson and Seymour's seminal work~\cite{RobertsonS95b}, while the fixed-parameter tractability of the latter has only been established on planar grid graphs in a recent work of Eiben, Ganian and Kanj~\cite{EibenGK23}.

We start by bounding the size of a hypothetical solution by a function of the number of commodities---an insight which will be used later on in this section.
\begin{lemma}
\label{lem:commodities_bounded_sol}
	Every YES-instance $(D,K,T)$ of \pfixed or \pvar has a solution path cover whose unions consists of at most $2^{|K|} + |K|$ vertices.
\end{lemma}
\begin{proof}
	Let $k := |K|$ and $K = \set{\kappa_1, \dots, \kappa_k}$.
	By deleting excess vertices and edges from any solution, there is a solution path cover with the paths $Q_1, \dots, Q_k$, where $Q_i$ is a path that satisfies $\kappa_i$ for $i \in [k]$. Let $H$ be the solution graph that is the union of the solution path cover.
	For every non-source vertex $v\in V(H)$, we define $\calQ(v) := \set{i \mid v \in Q_i}$.
	In particular, note that for every non-source vertex $v$ in the graph, we have $\calQ(v) \neq \emptyset$.

	Suppose there are non-source vertices $u,v$ such that $\calQ(u) = \calQ(v)$.
	Consider the following modification of $H$.
	For every path $Q_i$ that visits $u$ before $v$, we remove the subpath between $u$ (inclusive) and $v$ (exclusive) from $Q_i$.
	In other words, if $Q_i = (v_1, \dots, v_{\ell}, u, v_{\ell+1}, \dots, v_ {j}, v, \allowbreak  v_{j+1}, \dots, v_t)$, then we replace this path by $(v_1, \dots, v_{\ell}, v, v_{j+1}, \dots, v_t)$.
	Similarly, for every path $Q_i$ that visits $v$ before $u$, we remove the subpath between $v$ (inclusive) and $u$ (exclusive) from $Q_i$.
	Note that after the modification, the two vertices $u$ and $v$ no longer satisfy $\calQ(u) = \calQ(v)$.

	We argue that the graph $H'$ as the union of all $k$ paths after the modification is still a solution graph to $(D,K,T)$.
	It is easy to see that every modified path is in the transitive closure of the original path, and hence the modified path still satisfies the corresponding commodity.
	It remains to show that the maximum outdegree does not increase.

	Observe that every newly added edge ends in $u$ or $v$ and does not start at either $u$ nor $v$.
	Further, note that if the edge $(a,u)$ was added, then previously $(a,v)\in E(H)$ and vice versa.
	Hence, if $(a,u),(a,v)\in E(H')$ then $(a,v),(a,u)\in E(H)$ and the degree of $a$ did not increase.
	Otherwise, let without loss of generality $(a,v)\in E(H')$ and $(a,u)\in E(H)$.
	As before the modification $\calQ(u) = \calQ(v)$, the edge $(a,u)$ in $H$ is only used in paths that visit both $u$ and $v$. 
	Consider any such path. 
	If it visits $u$ before $v$ in $H$, the path skips $u$ in $H'$ and hence does not contain the edge $(a,u)$.
	If it visits $v$ before $u$ in $H$, then using the edge $(a,u)$ would imply that $a$ lies in between $v$ and $u$ in the path. Thus, $a$ is skipped in $H'$ and the path does not contain $(a,u)$.
	Hence, $(a,u)\notin H'$, which compensates for the additional edge $(a,v)$ and $a$ does not increase its degree.	
	
	Apply the modification above repeatedly as long as there are two non-source vertices $u,v$ such that $\calQ(u) = \calQ(v)$.
	This procedure terminates after less than $k\cdot |V(D)|$ iterations as each time at least one of the $Q_i$ paths is shortened.  
	At the end, we obtain a solution graph such that every non-source vertex has a distinct $\calQ$ value.
	The lemma then follows from the facts that (i) the codomain of $\calQ$ is the set of all subsets of $[k]$ and (ii) there are at most $k$ sources.
\end{proof}
  
\subsection{A Fixed-Parameter Algorithm for \pfixed}
In this subsection, we establish the fixed-parameter tractability of \pfixed\ by the number of commodities.
 Our proof relies on three main ingredients.
First, we provide a straightforward proof that \pfixed instances with target $T \leq 1$ are in \fpt by the number of commodities (\cref{thm:var_commodities_target1}). This allows us to concentrate on the case of $T\geq 2$ in the rest of the section. 

\begin{theorem}
\label{thm:var_commodities_target1} 
	\pfixed restricted to instances such that $T \leq 1$ is in \fpt w.r.t.\ the number of commodities.
\end{theorem}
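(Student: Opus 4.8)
The plan is to exploit the fact that the target $T=1$ forces every vertex of the solution graph $H$ to have outdegree at most one, i.e.\ $H$ is a \emph{functional (di)graph}. First I would observe that this makes each witness path essentially forced: since $H \cap \tc(P_i)$ inherits the outdegree bound, the unique $s_i$-$t_i$-path it may contain is obtained by starting at $s_i$ and repeatedly following the single outgoing edge (whenever it lies in $\tc(P_i)$) until $t_i$ is reached. Writing $U := \set{s_i, t_i \mid i \in [k]}$ (so $|U| \le 2k$), the key structural claim I would establish is that every YES-instance with $T=1$ admits a solution graph whose vertex set is contained in $U$.

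To prove this claim I would take any solution graph $H$, extract for each commodity its forced witness path $Q_i \subseteq H \cap \tc(P_i)$, and work with the union $H' := \bigcup_i Q_i$, which still has outdegree at most one. Every vertex $v \in V(H') \setminus U$ is an internal vertex of some $Q_i$ (the endpoints of all $Q_i$ lie in $U$), so $v$ has exactly one outgoing edge, say $v \to c$, and this is the edge through which \emph{every} witness path leaving $v$ continues. I would then \emph{bypass} $v$: redirect each in-edge $a \to v$ to $a \to c$ and delete $v$. For a commodity $i$ routed through $a \to v \to c$ we have $a \prec_i v \prec_i c$ along $P_i$, hence $(a,c) \in \tc(P_i)$, so the shortcut is admissible; moreover no outdegree increases and no commodity is broken. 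Iterating this bypass removes all vertices outside $U$ and proves the claim. Note this argument crucially relies on $T=1$ (a sink of $H'$ must be some $t_i$, and branching into two out-edges is impossible), which is exactly why the general-$T$ machinery is not needed here.

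Given the structural claim, the \fpt\ algorithm is then a brute force over $U$. After precomputing $\tc(D)$ and, for each commodity, the position of every vertex along $P_i$, I would enumerate all at most $(2k+1)^{2k}$ digraphs $H$ on vertex set $U$ with maximum outdegree at most one (each vertex picks one out-neighbour in $U$ or none). For each candidate I would verify in polynomial time that every edge of $H$ lies in $\tc(D)$ and that, for each commodity $i$, simulating the forced walk from $s_i$ in $H$ reaches $t_i$ using only edges $(x,y)$ with $x \prec_i y$ on $P_i$; the strict increase of positions guarantees this simulation terminates after at most $|V(P_i)|$ steps. The instance is a YES-instance iff some candidate passes all checks, giving total running time $\bigO{(2k+1)^{2k}}\cdot n^{\bigO{1}}$, which is \fpt. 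The degenerate case $T=0$ is handled by the single candidate $H = (U,\emptyset)$, which succeeds iff $s_i = t_i$ for all $i$.

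The main obstacle I expect is the structural claim rather than the algorithm: one must argue that the bypass operation is simultaneously valid for \emph{all} commodities sharing the eliminated vertex and never violates the per-commodity membership in $\tc(P_i)$ while keeping the outdegree bounded by one. Once the reduction to the $\le 2k$ named vertices is in place, the remainder is routine enumeration.
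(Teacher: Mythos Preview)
Your proposal is correct and follows essentially the same approach as the paper's proof. Both arguments hinge on the same structural insight: when $T=1$, any non-source/non-destination vertex $v$ in a solution has a unique out-neighbour $w$, so every in-edge $(u,v)$ can be shortcut to $(u,w)$ without increasing outdegrees and while preserving membership in each $\tc(P_i)$; iterating this yields a solution supported on the at most $2k$ endpoints, after which one brute-forces. The only cosmetic difference is that the paper packages the reduction as an explicit kernel $(D',K',1)$ on the endpoint set, whereas you enumerate candidate solution graphs on $U$ directly against the original instance; the underlying exchange argument and the resulting $f(k)\cdot n^{\bigO{1}}$ bound are the same.
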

    \begin{proof}
 	We prove the statement by providing an $f(|K|)$-kernel. 
 	In particular, we show that any \pfixed instance $(D,K,1)$ can be reduced to the equivalent instance $(D',K',1)$, where $V(D')$ contains only vertices that are the source or destination of a commodity in $K$, $E(D') = \{(v,v')\mid v,v'\in V(D'), (v,v')\in\tc(D)\}$, and $K'$ is obtained from $K$ by deleting all vertices from all paths that are neither start nor destination of any commodity.

	As $\tc(D')$ is a subgraph of $\tc(D)$, any solution to $\I'=(D',K',1)$ is a solution to $\I=(D,K,1)$.
	In the reversed direction, we show that if $\I$ is a YES-instance then it has a solution that does not use any of the deleted vertices by an exchange argument.
	Let $H$ be a solution graph for $\I$ and let $v$ be a non-source, non-destination vertex in $H$. 
	If $v$ has no out-neighbor in $H$, it does not contribute to the solution and can be removed.
	Otherwise, as $H$ has outdegree 1, $v$ has a unique out-neighbor $w$ in $H$. 
	Create a new solution graph $H'$ by letting $V(H') = V(H)\setminus \set{v}$, deleting the edge $(v,w)$, and replacing each edge $(u,v)\in E(H)$ by an edge $(u,w)\in E(H')$. 
	Then the outdegree of each vertex in $H'$ is at most as large as its outdegree in $H$ and thus at most 1.
 	Moreover, for every commodity routed along an $s$-$t$-path $P$ there is still an $s$-$t$-path in $H'\cap \tc(P)$. 
	In either case, $H'$ still resolves all commodities in $K$.
	By repeatedly applying this procedure, we can create a solution graph $H'$ in which all non-source, non-destination vertices are removed. 
	Thereby $H'$ is a solution graph for $\I'$.

	Clearly, the above reduction is computable in polynomial time and leaves a graph with at most $2|K|$ vertices.
	By brute forcing the reduced instance, this kernel yields \fpt runtime. 
\end{proof} 

Second, we recall the following generalization of Ramsey's Theorem \cite{ramsey}:

\begin{fact}
\label{thm:ramsey}
There is a function \ram: $\N \times \N \to \N$ with the following property.
For each pair of integers $r, s$ and each clique $G$ of size at least \ram$(r, s)$ such that each edge has a single label (color) from a set of $r$ possible labels, it holds that $G$ has a subclique $H$ of size at least $s$ such that every edge in $H$ has the same label. 
\end{fact}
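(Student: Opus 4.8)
The plan is to prove a stronger, \emph{off-diagonal} version of the statement by induction and then specialize. For positive integers $s_1, \ldots, s_r$, let $R(s_1, \ldots, s_r)$ denote the least integer $N$ such that every labeling of the edges of a clique on $N$ vertices with colors from $[r]$ yields, for some color $i \in [r]$, a monochromatic subclique of color $i$ and size $s_i$. The function required by the statement is then simply $\ram(r,s) := R(s, s, \ldots, s)$ with $r$ entries, so it suffices to show that $R(s_1, \ldots, s_r)$ is finite for every choice of arguments. The benefit of this reformulation is that it lets the induction reduce a single $s_i$ while keeping the remaining targets fixed.

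I would establish finiteness by induction on $\sum_{i=1}^r s_i$. The base case is when some $s_i = 1$: a single vertex already constitutes a (vacuously monochromatic) color-$i$ clique of size $1$, so $R(s_1,\ldots,s_r)=1$, and we may therefore assume $s_i \geq 2$ for all $i$ in the inductive step. For the step, set
\[
N := \sum_{i=1}^r R(s_1, \ldots, s_{i-1}, s_i - 1, s_{i+1}, \ldots, s_r) - r + 2,
\]
which is finite by the induction hypothesis, and fix any $r$-coloring of the clique $G$ on $N$ vertices. Pick an arbitrary vertex $v$ and, for each color $i$, let $d_i$ be the number of color-$i$ edges incident to $v$, so that $\sum_i d_i = N - 1$. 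A pigeonhole computation shows that $d_i \geq R(\ldots, s_i - 1, \ldots)$ for at least one color $i$: otherwise every $d_i$ would be at most $R(\ldots, s_i - 1, \ldots) - 1$, forcing $\sum_i d_i \leq N - 2$, which contradicts $\sum_i d_i = N-1$.

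Fixing such a color $i$, let $S$ be the set of endpoints of the color-$i$ edges at $v$; since $|S| \geq R(\ldots, s_i - 1, \ldots)$, the induction hypothesis applied to the clique $G[S]$ gives either a monochromatic clique of the required size in some color $j \neq i$ (and we are done), or a color-$i$ clique of size $s_i - 1$ inside $S$. In the latter case, appending $v$---which is joined to all of $S$ by color $i$---produces a color-$i$ clique of size $s_i$, completing the step. Specializing to $s_1 = \cdots = s_r = s$ then yields the finite bound $\ram(r,s)$ claimed in the statement.

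The main obstacle, and the reason for strengthening to the off-diagonal version at the outset, is that a naive induction directly on the symmetric quantity $\ram(r,s)$ does not close: reducing the target clique size in a single color breaks the symmetry, so one cannot remain within the diagonal family $R(s,\ldots,s)$. Passing to $R(s_1, \ldots, s_r)$ resolves this, at the cost of tracking $r$ separate parameters and verifying that the pigeonhole inequality is tight enough to license the ``$+v$'' extension. An alternative route would reduce the $r$-color case to the classical two-color theorem by merging colors $\{2, \ldots, r\}$ into one and inducting on $r$; this sidesteps the off-diagonal bookkeeping but requires a separate two-color base case and yields weaker bounds, so I would prefer the direct asymmetric induction above.
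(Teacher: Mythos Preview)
Your argument is correct: the off-diagonal strengthening, the induction on $\sum_i s_i$, the pigeonhole bound on the color-degrees of a fixed vertex, and the extension by $v$ are all standard and sound. The arithmetic in the pigeonhole step checks out, and the specialization to the diagonal case at the end is immediate.

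There is, however, nothing to compare against: the paper does not prove this statement. It is recorded as a \emph{Fact} with a citation to the Ramsey-theory literature and is invoked as a black box in the proof of Theorem~\ref{thm:fixed_commodities} (where only the existence of a computable upper bound $\overline{\ram}$ on $\ram$ is used). Your write-up therefore supplies strictly more than the paper does at this point; if you want it to dovetail with how the result is later applied, you might add one sentence observing that the recursion you set up yields a computable (indeed explicit) upper bound on $\ram(r,s)$, since that computability is what the paper actually relies on.
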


In particular, we remark that \ram\ has a computable upper bound.

Third, we define the following generalization of \pfixed.

  \vspace{0.2cm}
\myproblem{\textsc{Subset-Min-Degree Sort Point Problem} (\pfixedgen)}
	{A digraph $D=(V,E)$, a set $F \subseteq V$ of vertices, a target $T\in \N$, and a set $K$ of routed commodities each of which is a tuple of the form $(s,t,P)$ where $s, t \in V$ and $P$ is an $s$-$t$-path in $D$.}	
	{Is there a subgraph $H$ of the transitive closure $\tc(D)$ of $D$ such that every vertex in $V(D) \setminus F$ has outdegree at most $T$ in $H$, and for every commodity $(s,t,P) \in K$, there exists a directed $s$-$t$-path in $H \cap \tc(P)$?
}

  \vspace{0.2cm}
We call the vertices in $F$ in the definition above \emph{flexible}, as their outdegrees are not constrained.
The terms solution graph and solution path cover are defined analogously for \pfixedgen.
The size of a solution can be bounded by a function of the number of commodities using an identical argument as in the proof of \cref{lem:commodities_bounded_sol}. 

\begin{lemma}\label{lem:commodities_bounded_sol_gen}
	Every YES-instance $(D,K,T)$ of \pfixedgen has a solution path cover whose union consists of at most $2^{|K|} + |K|$ vertices.
\end{lemma}

For an \pfixed instance $(D,K,T)$ and $v\in V(D)$, let the \emph{type} $\calP(v)$ of $v$ be the set of routed commodities with paths using $v$ (i.e.,  $\calP(v) = \{(s,t,P)\mid (s,t,P)\in K,\allowbreak P$~contains~$v\}$). 
In \cref{lem:comm_fixed_equiv_subset} below, we reduce \pfixed instances to \pfixedgen with a nontrivial set $F$ such that there are only few non-flexible vertices. To define $F$, we build on the following definition.

\begin{definition}[$q$-enclosure]
	Let $\I = (D,K,T)$ be an \pfixed instance and $q$ be a natural number.
	A vertex $v$ is \emph{$q$-enclosed} w.r.t.\ $\I$, if there exist $U\subseteq K$, $U \neq \emptyset$, and distinct vertices $v_1, \dots, v_{2q+1}$ of type $U$ in $D$ such that $v_{q+1} = v$, and for all commodities $(s, t, P) \in U$, the directed path $P$ contains $(v_1, \dots, v_{2q+1})$ or $(v_{2q+1}, \dots, v_1)$ as a subsequence.
	 \end{definition}

\begin{lemma}
 \label{lem:comm_fixed_equiv_subset}
	For an \pfixed instance $(D,K,T)$ with $T \geq 2$, let $k:=|K|$, and $q = \ceil{(2^k+k)\big(1+\frac{k}{T-1}\big)}$.
	Suppose that $|V(D)| > q$.
	Further suppose $F$ is a set of $q$-enclosed vertices w.r.t.\ $(D,K,T)$.
	Then $(D,K,T)$ is a YES-instance if and only if the \pfixedgen instance $(D,F,K,T)$ is a YES-instance.
\end{lemma}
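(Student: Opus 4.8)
The plan is to prove the two implications separately; the forward one is immediate and essentially all the difficulty lies in the backward one.

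For the forward implication, note that the constraint defining \pfixedgen is merely a relaxation of the one defining \pfixed: each vertex of $V(D)\setminus F$ inherits the same outdegree bound, while the vertices of $F$ become unconstrained. Hence any solution graph of $(D,K,T)$ is already a solution graph of $(D,F,K,T)$, so $(D,K,T)$ being a YES-instance immediately yields that $(D,F,K,T)$ is one.

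For the backward implication I would start from a solution of $(D,F,K,T)$ and, by \cref{lem:commodities_bounded_sol_gen}, realize it as the union $H$ of a solution path cover $Q_1,\dots,Q_k$ on at most $N:=2^k+k$ vertices. The crucial first observation is that, since $H$ is the union of only $k$ paths and each path leaves any vertex through at most one edge, every vertex of $H$ has outdegree at most $k$; thus only flexible vertices can violate the target, and each by at most $k-T$. It therefore suffices to repair flexible vertices one at a time, each time lowering the outdegree of the chosen $v\in F$ to at most $T$ without raising any other vertex above $T$ — equivalently, showing that $(D,F\setminus\{v\},K,T)$ remains a YES-instance — and then iterating until $F$ is emptied, at which point we reach $(D,\emptyset,K,T)=(D,K,T)$.

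To repair $v$ I would exploit its $q$-enclosure $v_1,\dots,v_{2q+1}$ (with $v=v_{q+1}$ and common type $U$): every commodity of $U$ traverses these vertices in one fixed order, so all pairs $(v_a,v_{a+1})$ lie in $\tc(D)$ and the chain can serve as a relay highway along which the out-edges of $v$ are fanned out in groups of at most $T-1$. Two facts make this possible. First, every witness path through $v$ belongs to $U$, and each of its successors $w$ occurs after $v=v_{q+1}$ on the corresponding path, so $w$ is reachable in $\tc(D)$ from the entire prefix $v_1,\dots,v_{q+1}$, while symmetrically every source reaches $v_1$. Second, because $|V(H)|\le N$ whereas the chain has $2q+1>N$ vertices, at least $2q+1-N$ chain vertices are absent from the current graph and can be used as fresh relay and branch vertices. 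Selecting about $k/(T-1)$ such free vertices as branch points, linking them by forwarding edges, and rerouting each affected commodity so that it enters the chain, travels to its assigned branch point, and peels off to its successor there, one gets that every branch vertex forwards once and peels off at most $T-1$ distinct successors (hence outdegree at most $T$), while $v$ itself drops to at most $T$. This is exactly the calibration $q=\lceil N(1+\tfrac{k}{T-1})\rceil$: the summand $N$ absorbs the chain vertices obstructed by $H$, and the factor $\tfrac{k}{T-1}$ supplies the relay depth needed to disperse the at most $k$ out-edges of $v$ (here $T\ge 2$ guarantees $T-1\ge 1$).

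The step I expect to be the main obstacle is the degree bookkeeping of this rerouting. Redirecting a commodity away from $v$ inevitably alters the out-edges of the vertices on its old prefix — its source and the predecessors of $v$ — and a successor $w$ lying immediately after $v$, with no chain vertex between $v$ and $w$, can be reached only from chain vertices up to $v$ itself, which constrains where it may be peeled off. I would address this by always entering the chain early (through the before-chain $v_1,\dots,v_q$, reachable from every source and from sufficiently early predecessors), arranging entries so that rerouting replaces rather than adds an out-edge at each entry vertex, and drawing every relay and branch vertex from the free part guaranteed by the $N$-slack in $q$; one must then verify that no source, predecessor, relay, or chain vertex shared between successive repairs is ever pushed above outdegree $T$, and that each rerouted path stays inside the transitive closure of its commodity's path (which holds because the whole relay lies within the path shared by all commodities of $U$). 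Checking these invariants, together with maximum outdegree $T$ of the final graph, completes the backward implication.
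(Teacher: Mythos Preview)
Your high-level plan matches the paper's: the forward direction is immediate, and for the backward direction you iterate over flexible vertices, using the $q$-enclosure chain to redistribute the out-edges of each chosen $v$ in batches of size $T-1$. Two points, however, are genuine gaps rather than mere bookkeeping.

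\textbf{(1) The iteration does not close.} You write ``iterate until $F$ is emptied'' and justify the availability of fresh relay vertices via $|V(H)|\le N$. But $|F|$ is not bounded by any function of $k$, and after even one repair your solution grows beyond $N$, so the count ``$2q+1-N$ free chain vertices'' is already wrong at the second step. The paper fixes this in two moves you are missing: first, it only repairs the flexible vertices that actually occur in the initial small cover $C$---that is, $F\setminus X$, of which there are at most $N=2^k+k$; second, it maintains the explicit invariant that after $i$ repairs the cover has at most $N+ik/(T-1)$ vertices. Since $i<N$, the cover never exceeds $q$, and hence on each side $v_{q+2},\dots,v_{2q+1}$ (resp.\ $v_1,\dots,v_q$) at least $k/(T-1)$ vertices remain unused. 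This invariant is exactly what makes the calibration $q=\lceil N(1+k/(T-1))\rceil$ work; your reading of the formula (``the summand $N$ absorbs obstruction, the factor $k/(T-1)$ is relay depth for one $v$'') accounts for a single step and does not explain the $N$-fold multiplier.

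\textbf{(2) The rerouting is over-engineered and not obviously sound.} You propose to intercept each affected commodity \emph{before} $v$, enter the chain through some $v_j$ with $j\le q$, and peel off to the old successor $w$ of $v$. This forces you to worry about predecessors (does the entry edge replace rather than add?), about whether $(u,v_j)\in\tc(P)$ when $u$ was the vertex immediately before $v$, and about left versus right commodities sharing the same half of the chain. The paper avoids all of this: it keeps the $s$-$v$ prefix of every affected witness path untouched, adds a single out-edge from $v$ into each half of the chain (so $\delta^+(v)\le 2\le T$), walks along fresh chain vertices, and from each relay $v_p$ adds the edge $(v_p,t)$ \emph{directly to the destination} $t$ rather than to the former successor $w$. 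Since $t$ is the endpoint of $P$ and $v_p\in P$, the edge $(v_p,t)$ is always in $\tc(P)$, and no predecessor of $v$ changes at all. Adopting this simplification removes the obstacle you flag in your final paragraph.
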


\begin{proof}
	Trivially, a solution path cover of $(D,K,T)$ is also a solution path cover of $(D,F,K,T)$.
	Hence, we only need to prove the other direction.

	Suppose that $C$ is a solution path cover of $(D,F,K,T)$.
	By \cref{lem:commodities_bounded_sol_gen}, we can assume that the union of $C$ has at most $2^k + k$ vertices.
	Let $X$ be the set of vertices not in $C$.
	Note that $C$ is also a solution path cover of $(D, F \setminus X, K,T)$.
	
	We prove by induction that there exists a sequence $C_0, C_1, \dots, C_{|F \setminus X|}$ such that for $i \in \set{0, \dots, |F \setminus X|}$, $C_i$ has at most $2^k + k+ki/(T-1)$ distinct vertices, and $C_i$ is a solution path cover of $(D, F_i, K,T)$, where $|F_i| = |F \setminus X| - i$ and $F_i \subseteq F \setminus X$.
	
	It is easy to see that the choice of $C_0 \equiv C$ satisfies the requirement.
	For the inductive step, suppose that for some value of $i < |F \setminus X|$, there exist $C_0, \dots, C_i$ as required.
	Let $v$ be a vertex in $F_i$.
	From $C_i$, we construct a solution path cover $C_{i+1}$ of the \pfixedgen instance $(D, F_i \setminus \set{v},K, T)$ such that $C_{i+1}$ has at most $2^k + k + k(i+1)/(T-1)$ distinct vertices.

	Since $F_i \subseteq F$, $v$ is $q$-enclosed.
	Let $v_1, \dots, v_{2q+1}$ be vertices of some type $U \subseteq K$ that witness its $q$-enclosure.
	In particular, $v_{q+1} = v$.
	By definition, $v_1, \dots, v_{2q+1}$ are only used in the paths in $U$, and these paths visit these vertices in the order $v_1,\dots, v_{2q+1}$ (right commodities) or $v_{2q+1},\dots,v_1$ (left commodities).

	We construct $C_{i+1}$ from $C_i$ in three phases.
	In the first phase, for each commodity whose witness path in $C_i$ does not visit $v$, we add the witness path to $C_{i+1}$.
	Let $H'$ be the union of the paths in $C_{i+1}$.
	Then for each remaining commodity $(s,t,P)$, we add to $H'$ the $s$-$v$-subpath of its witness path in $C_i$.
	Up to this point, the outdegree of $v$ in $H'$ is zero and all unsatisfied commodities in $H'$ have a directed path from their source to $v$.

	In the second phase, we process the set $R$ of the right commodities that do not yet have a witness path in $C_{i+1}$.
 	If $R=\emptyset$, there is nothing to do.
	Otherwise, for $j \in \set{q+2, \dots, 2q+1}$, define $r(j) := T - \delta^+_{H'}(v_j)$.
	We iteratively resolve all commodities in $R$ and initially let $\ell := q+1$.
	We add the edge $(v_\ell,v_p)$ to $H'$, where $p$ is the minimal index such that $\ell < p \le 2q+1$ and $r(p) > 0$.
	If $|R| > r(p)$, choose arbitrary $r(p)-1$ right commodities in $R$, otherwise, choose all of them.
	For each chosen commodity $(s,t,P)$, we add the edge $(v_p,t)$ to $H'$ and remove the commodity from $R$.
	Additionally, we add its witness path to $C_{i+1}$ as follows: The path is the concatenation of the $s$-$v$-subpath of its witness path in $C_i$, the $v$-$v_p$-path in $H'$ using only vertices in $\set{v_1, \dots, v_{2q+1}}$, and the newly added edge $(v_p,t)$.	
	After processing these chosen commodities, if $R\neq \emptyset$,
	 	 we let $\ell = p$ and proceed with the next iteration.

	In the last phase, we process the left commodities that do not yet have a witness path in $C_{i+1}$ in a similar manner, except that we iterate backwards from $q+1$ to $1$.

	We first show that $C_{i+1}$ is a solution path cover of $(D, F_i \setminus \set{v},K, T)$.
	Note that we add at most one out-edge from $v$ in each of the last two phases, and hence $\delta^+_{H'}(v) \leq 2 \leq T$.
	Let $H$ be the union of $C_i$, that is, $H$ is a solution graph for $(D, F_i, K, T)$.
	Then, by construction, for all vertices $v'\neq v$ in $V(H')$ we have  $\delta^+_{H'}(v') \leq \delta^+_{H}(v')$ or $\delta^+_{H'} \le T$. 
	Since $H$ is a solution graph of $(D, F_i,K, T)$, it follows that the outdegree in $H'$ of a vertex outside the set $F_i \setminus \set{v}$ is at most $T$.
	It remains to show that each commodity in $K$ has a witness path in $C_{i+1}$.
	By construction and by symmetry, we only need to prove that there exists $p$ in the second phase, and by the end of that phase, the set $R$ is empty.
	By the inductive hypothesis, the number of distinct vertices in $C_i$ is at most $2^k + k + ki/(T-1)$. 
	Combined with the definition of $q$ and the fact that $i < |F \setminus X| \leq |V(D) \setminus X| \leq 2^k+k$, this implies that there are at least $k/(T-1)$ vertices among $v_{q+2}, \dots, v_{2q+1}$ that are not used in $C_i$.
	Hence, the minimal index $p$ as required in the second phase exists.
	Moreover, since there are at most $k$ right-commodities in $K$, it is easy to see that these vertices are sufficient to ensure that $R$ is empty by the end of the second phase.
		
	Finally, since we process at most a total of $k$ commodities in the second and third phases, it is easy to see that we added out-edges to at most $k/(T-1)$ vertices in $H$. 
	Thus, the number of distinct vertices in $C_{i+1}$ increases by at most $k/(T-1)$ compared to $C_i$.
	Using the inductive hypothesis, the number of distinct vertices in $C_{i+1}$ hence is bounded by $2^k + k + k(i+1)/(T-1)$ as required. 
	
	Note that $|F_{|F \setminus X|}| = 0$ so $(D, F_{|F \setminus X|}, K,T)$ is equivalent to $(D,K,T)$.
	Hence, $C_{|F \setminus X|}$ is a solution path cover of $(D,K,T)$.
\end{proof}

We now have all components to establish the main result of this subsection.
 
\begin{theorem}
 \label{thm:fixed_commodities}
	\pfixed is in \fpt w.r.t.\ the number of commodities.
\end{theorem}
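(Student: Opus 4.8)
The plan is to design a kernelization: I will transform any instance of \pfixed into an equivalent one whose size is bounded by a function of $k := |K|$, and then solve the kernel by brute force (enumerating all subgraphs of the transitive closure), which runs in \fpt time. First I dispose of the boundary values of $T$ so that $q := \ceil{(2^k+k)(1+\frac{k}{T-1})}$ becomes bounded purely in terms of $k$. The case $T=0$ is solvable in linear time, the case $T=1$ is handled by \cref{thm:var_commodities_target1}, and whenever $T \geq k$ the instance is a trivial YES-instance by \cref{obs:bounded_target}, since at most $k$ commodities can start in a common vertex. Hence I may assume $2 \le T \le k-1$, which gives $\frac{k}{T-1}\le k$ and therefore $q \le \ceil{(2^k+k)(1+k)}$, a function of $k$ alone; if moreover $|V(D)| \le q$ the instance is already small and can be brute-forced, so I also assume $|V(D)| > q$.

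The core of the argument classifies each vertex $v$ by its type $\calP(v)$---the set of commodities whose fixed path runs through $v$---of which there are at most $2^k$ distinct values; vertices of type $\emptyset$ lie on no path and may be discarded, as they can never appear on a witness path and removing them only decreases outdegrees. The key structural tool is \cref{thm:ramsey}: fixing a type $U$, I color each pair of type-$U$ vertices by the vector recording, for every path $P$ with $(s,t,P)\in U$, whether the first vertex precedes the second on $P$; this uses at most $2^{|U|}\le 2^k$ colors. A monochromatic clique of size $2q+1$ is exactly a set of $2q+1$ vertices that every path in $U$ traverses in one consistent order (forwards or backwards per path), and its central vertex is therefore $q$-enclosed. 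Consequently, once a type has sufficiently many vertices it contains $q$-enclosed vertices, and by \cref{lem:comm_fixed_equiv_subset} these may be treated as flexible without changing the answer.

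Building on this, the reduction keeps all sources and destinations together with, for each type $U$, a bounded number $R$ of its vertices (all of them if there are fewer), where $R$ is a Ramsey threshold chosen large enough that the retained type-$U$ vertices still contain a consistently-ordered block yielding more than $2^k+k$ many $q$-enclosed vertices. I then form $D'$ by placing on the retained vertex set exactly the edges of $\tc(D)$ it induces, and replace each path $P$ by its restriction $P'$ to the retained vertices; this yields $|V(D')| \le 2k + 2^k\cdot R = f(k)$. The direction ``reduced is YES $\Rightarrow$ original is YES'' is immediate, since $\tc(D')\subseteq \tc(D)$ and each $P'$ is a subsequence of $P$, so $\tc(P')\subseteq \tc(P)$ and any solution graph of the reduced instance is already a solution graph of the original one. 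For the converse I pass, via \cref{lem:comm_fixed_equiv_subset} applied to both instances, to the flexible versions $(D,F,K,T)$ and $(D',F',K',T)$: a solution path cover of $(D,F,K,T)$ has at most $2^k+k$ vertices by \cref{lem:commodities_bounded_sol_gen}, and I relocate its witness paths onto the retained vertices by substituting, for every discarded type-$U$ vertex the cover uses, a retained $q$-enclosed vertex of the same type avoided by the cover---such substitutes exist by the choice of $R$, and the consistent ordering guaranteed by enclosure keeps each substitute at a position compatible with the reachability constraint $\tc(P')$.

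I expect this last relocation step to be the main obstacle. One has to argue, exactly as in the rerouting used to prove \cref{lem:comm_fixed_equiv_subset}, that the substituted (flexible) vertices can absorb the merged streams while (i) staying inside each $\tc(P')$ and (ii) not increasing the outdegree of any retained non-flexible vertex---which forces the substitution to be applied consistently, so that streams merged before the reduction remain merged afterwards. This is precisely why the enclosure parameter $q$ is tuned to leave spare capacity once a cover of size up to $(2^k+k)(1+\tfrac{k}{T-1})$ has been placed. Once the equivalence of the kernel is established, solving it by exhaustively testing all subgraphs of $\tc(D')$ runs in time depending only on $k$, completing the \fpt algorithm.
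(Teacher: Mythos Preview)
Your strategy shares the paper's core ingredients—the type partition $\calP(v)$, Ramsey to find consistently ordered blocks and hence $q$-enclosed vertices, and \cref{lem:comm_fixed_equiv_subset}—but you attempt a genuine vertex kernel whereas the paper does not, and your relocation step has a real gap. When you substitute a discarded cover vertex $v$ of type $U$ by a retained $q$-enclosed $v'$ of the same type, you need each neighbour $a$ of $v$ on the witness path to precede (or follow) $v'$ on the corresponding $P_i$. But $q$-enclosure only controls the relative order of the retained type-$U$ vertices \emph{among themselves} across the paths in $U$; it says nothing about where a vertex $a$ of some \emph{other} type sits relative to that block on a particular $P_i$. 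If on $P_i$ every retained type-$U$ vertex precedes some retained vertex $a$ of strictly larger type while the discarded $v$ lies after $a$, then no choice of $v'$ gives $(a,v')\in\tc(P_i')$, and the substituted path is invalid. The rerouting inside the proof of \cref{lem:comm_fixed_equiv_subset} does not rescue this: that argument redistributes the out-edges of a single flexible vertex among its enclosing type-$U$ neighbours along each path, it never translates a vertex across a vertex of a different type. You correctly flag this step as the main obstacle, but the sketch you give does not address positional incompatibility with non-$U$ neighbours, so the forward direction of your kernel equivalence is unproven.

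The paper sidesteps this entirely by never deleting a vertex. It iterates Ramsey within each type to \emph{mark} vertices as $q$-enclosed until only $f(k)$ unmarked vertices remain, sets $F$ to the marked set, and solves the equivalent \pfixedgen instance $(D,F,K,T)$ directly. The point is that in any solution of the flexible instance one may assume every vertex in $F$ has a direct edge to every reachable destination, so the only edges one must guess are those emanating from the at most $f(k)$ unmarked vertices; since each unmarked vertex has at most $k$ out-edges in $D$, the transitive closure of that sparse subgraph has size bounded in $k$, and one brute-forces over its edge subsets. This gives an \fpt\ algorithm without ever needing to argue that a discarded vertex can be repositioned inside a path.
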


\begin{proof}
	Let $(D, K, T)$ be an \pfixed instance with $|K| = k$ and $|V(D)| = n$.
	When $T \le 1$, we have fixed-parameter tractability by \cref{thm:var_commodities_target1}.
	Therefore, we now assume that $T \geq 2$.
	Recall the \ram\ function from \cref{thm:ramsey}.
	Let $\overline{\ram}: \N \times \N \rightarrow \N$ be a computable upper bound of \ram.

	For any vertex $v$ such that $\calP(v) = \emptyset$, we can remove it and its incident edges from $D$, because no solution will use $v$.	
	Let $q := \ceil{(2^k+k)\big(1+\frac{k}{T-1}\big)}$.
	If $n \leq q$, a solution graph of $(D,K,T)$, if one exists, can be brute forced in \fpt time.
	Hence, we now assume that $n > q$.
	
	For any nonempty subset $U$ of $K$, let $V_U$ be the set of vertices $v$ such that $\calP(v) = U$.	
	Further, let $(s_1, t_1, P_1), \dots, (s_{|U|}, t_{|U|}, P_{|U|})$ be the commodities in $U$.
	Define an edge-labeled complete graph on $V_U$, where the label of each edge is a $(|U|-1)$-tuple, whose $i^\text{th}$ element is a binary number to indicate whether $P_{i+1}$ visits the incident vertices of the edge in the same or opposite order as $P_1$.
	(Note that when $|U| = 1$, there is only one label.)
	As long as $|V_U| \geq \overline{\ram}(2^{|U|-1},2q+1)$, we choose an arbitrary subset of $V_U$ of size $\overline{\ram}(2^{|U|-1},2q+1)$, and by \cref{thm:ramsey}, we can find a subclique of size $2q+1$ whose edges have the same label.
	In other words, the paths in $U$ visit the vertices in this subclique either in the same order as $P_1$ does or in the reversed order.
	Then we mark the $(q+1)^{\text{th}}$ vertex in this order as a $q$-enclosed vertex.
	We repeat this procedure on the unmarked vertices until we obtain an instance where the number of unmarked vertices in $V_U$ is smaller than $\overline{\ram}(2^{k-1},2q+1)$. 
	
	After processing all subsets $U$ of $K$, we obtain an instance whose input graph has at most $f(k) := 2^k \cdot \overline{\ram}(2^{k-1},2q+1)$ unmarked vertices, while the marked vertices are $q$-enclosed.
	Let $F$ be the set of all marked vertices.
	By \cref{lem:comm_fixed_equiv_subset}, $(D,K,T)$ is a YES-instance, if and only if the \pfixedgen instance $(D,F,K,T)$ is so.

	We can assume that each edge in $D$ appears in the path of some commodity in $K$, as other edges are obsolete and can be removed.
	Hence, every unmarked vertex has outdegree at most $k$.
	Let $\thickbar{E}$ be the set of out-edges of all unmarked vertices in $D$.
	Let $\tilde{E}$ be the set of edges from every marked vertex~$v$ in $F$ to every destination in $K$ that is reachable from $v$ in $D$.
	Recall that in a solution graph of $(D,F,K,T)$, a vertex in $F$ can have arbitrarily large outdegree.	
	Therefore, if $(D,F,K,T)$ is a YES-instance, there exists a solution graph that contains all edges of $\tilde{E}$.
	We can use a brute force approach to find such a solution graph as follows.
	Consider the graph $D' := (V(D), \thickbar{E})$.
	For each $E^+\subseteq E(\tc(D'))$, we check if the graph $H=(V(D), E^+ \cup \tilde{E})$ is a solution graph of $(D,F,K,T)$.
	That way, we can decide $(D,F,K,T)$ and, as discussed above, decide $(D,K,T)$.

	Finally, we analyze the run time.
	Since after every application of \cref{thm:ramsey}, we mark one vertex, in total, we apply \cref{thm:ramsey} at most $n$ times.
	To find the subclique guaranteed by \cref{thm:ramsey}, we can use a brute force approach of trying all $(2q+1)$-vertex subsets of the vertex set of size $\overline{\ram}(2^{|U|-1},2q+1)$.
	This takes time $\bigO{\overline{\ram}(2^{k-1},2k+1)^{(2k+1)}}$.
	Next, since there are at most $f(k)$ unmarked vertices, each of which has outdegree at most $k$, we have $|\thickbar{E}| \leq kf(k)$.
	Therefore, the number of subsets of $E(\tc(D'))$ is bounded by some function of $k$.
	In total, the whole algorithm runs in \fpt time.
\end{proof}

\subsection{A Fixed-Parameter Algorithm for \pvar}
We use the color-coding technique with derandomization~\cite[Subsections 5.2 and 5.6]{CyganFKLMPPS15} to establish fixed-parameter tractability of \pvar with respect to the number of commodities.
Intuitively, we solve a restricted version of \pvar, namely \textsc{COLORFUL-}\pvar, where the vertices in $D$ are colored and a solution graph $H$ is not allowed to contain two vertices of the same color.
By \Cref{lem:commodities_bounded_sol}, this requires $2^{|K|}+|K|$ distinct colors. 
Given an \FPT algorithm for the colorful variant we obtain an \FPT algorithm for the original \pvar problem by applying the \FPT algorithm on different colorings of the input graph $D$. 
The color-coding technique establishes that with $\gamma$ colors we do not need to try more than $f(\gamma)\log(n)$ colorings to be guaranteed that any set of at most $\gamma$ vertices (and thereby any possible solution of that size) is colorful in at least one of the colorings, meaning that all vertices in the set have distinct colors.

We note that the algorithm underlying Theorem~\ref{thm:fixed_commodities} cannot be applied to Theorem~\ref{thm:pvar_commodities_fpt}, as it hinges on identifying ``irrelevant'' ($q$-enclosed) vertices. The crucial criterion for this relevance is the way in which the routed commodities use these vertices in their paths---some information which is not present for instances of \pvar.  On the other hand, the color-coding based algorithm for the theorem below does not trivially adapt to \pfixed as it requires to rearrange the subgraphs of a solution induced by SCCs into paths, which does not violate unrouted commodities but might violate routed ones.

\begin{theorem}
 \label{thm:pvar_commodities_fpt}
	\pvar is in \fpt w.r.t.\ the number of commodities.
\end{theorem}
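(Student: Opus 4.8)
The plan is to follow the two-layer strategy announced just before the statement: reduce \pvar{} to a colored variant \textsc{Colorful-}\pvar{} by a derandomized color-coding argument, and then solve the colored variant in \fpt{} time by branching over abstract \emph{solution templates} and checking whether each template can be realized inside $\tc(D)$.

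First I would set up the color-coding layer. By \cref{lem:commodities_bounded_sol}, every YES-instance $(D,K,T)$ admits a solution path cover whose union $H$ has at most $\gamma := 2^{|K|}+|K|$ vertices. I define \textsc{Colorful-}\pvar{} as the variant in which $V(D)$ carries a coloring $\chi\colon V(D)\to[\gamma]$ and a solution graph is additionally required to be \emph{colorful}, i.e.\ to contain at most one vertex of each color. Any colorful solution is in particular a solution of \pvar{}; conversely, if $(D,K,T)$ is a YES-instance, then fixing a solution $H$ on at most $\gamma$ vertices, there is a coloring under which the vertices of $H$ receive pairwise distinct colors, so $H$ is colorful. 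Using a standard $(n,\gamma)$-perfect hash family \cite[Subsections 5.2 and 5.6]{CyganFKLMPPS15}, it therefore suffices to run an algorithm for \textsc{Colorful-}\pvar{} on $f(\gamma)\cdot\log n$ colorings: the original instance is a YES-instance if and only if at least one of these runs succeeds. Since $\gamma$ is a function of $|K|$, this wrapper costs only a function-of-$|K|$ factor times $\log n$.

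Next I would solve \textsc{Colorful-}\pvar{} for a single fixed coloring $\chi$. The key observation is that a colorful solution is completely determined, up to the identity of the vertices realizing each color, by a \emph{template}: a directed graph on a subset of the color set $[\gamma]$. Since $\chi$ is fixed, the color of each source and each destination is already pinned, so the template only has to be consistent with $\chi$ on these colors. As a template has at most $\gamma$ vertices, there are at most $2^{\gamma^2}$ templates, a number bounded by a function of $|K|$, so I can afford to branch over all of them. For each template I would first discard it unless every color has outdegree at most $T$ and, for each commodity $(s,t)\in K$, the template contains a directed path from the color $\chi(s)$ to the color $\chi(t)$; these are purely combinatorial checks on the (small) template, and because colors are distinct the outdegree of the realized graph will equal the template outdegree.

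The remaining and, I expect, hardest step is \emph{realizability}: deciding whether a surviving template can be instantiated in $D$, i.e.\ whether one can choose a representative $\rho(c)$ of color $c$ for every template color $c$ such that $(\rho(c),\rho(c'))\in E(\tc(D))$ for every template edge $(c,c')$ and such that $\rho$ sends each source/destination color to its fixed vertex. The difficulty is that these reachability constraints couple the choices across all colors, so a naive search over representatives would only give an \xp{} bound of roughly $n^{\gamma}$. The delicate part of the proof is to avoid this: I would exploit that the source and destination colors are already pinned to concrete vertices and that reachability in $D$ is transitive, passing to the condensation of $D$ so that each template vertex only needs a host strongly connected component containing its color and respecting the template's reachability order. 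The bounded-size path-cover structure guaranteed by \cref{lem:commodities_bounded_sol} should then let the remaining (unpinned) colors be filled in by a reachability-guided dynamic program that follows the template rather than brute force. Solving realizability in polynomial time for each of the function-of-$|K|$ many templates, multiplied over the $f(\gamma)\log n$ colorings of the outer layer, yields the claimed \fpt{} running time and establishes the theorem.
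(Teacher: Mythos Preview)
Your two-layer plan (derandomized color coding with $\gamma=2^{|K|}+|K|$ colors, then branching over all templates on at most $\gamma$ colored vertices) matches the paper, and your template count is fine. The gap is precisely where you flag it: realizability.

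You cast realizability as finding a \emph{single} color-respecting embedding $\rho\colon V(\Gamma)\to V(D)$ with $(\rho(c),\rho(c'))\in E(\tc(D))$ for every template edge, and then gesture at ``pass to the condensation of $D$'' plus ``a reachability-guided dynamic program that follows the template''. This does not go through as stated, for two reasons the paper handles explicitly.

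First, the template $\Gamma$ need not be acyclic: commodities can induce directed cycles in a solution, and condensing $D$ says nothing about cycles in $\Gamma$, so a topological DP over $\Gamma$ is not even defined. The paper first proves a normalization step showing that every (colorful) solution can be modified so that each SCC of size $\geq 2$ contains a commodity source, hence a pinned vertex. Templates violating this are discarded; in the surviving templates, vertices inside a non-trivial SCC are handled by a two-way reachability test to that pinned source, and the SCCs of $\Gamma$ can then be linearized.

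Second, even on acyclic templates, insisting on a single $\rho$ couples the choices across confluent branches (for a diamond $a\to b,\ a\to c,\ b\to d,\ c\to d$ with $d$ unpinned, the chosen $b$- and $c$-vertices may only reach \emph{different} viable $d$-colored vertices). The paper sidesteps this entirely: it sweeps the linearized template from sinks to sources and, for each color, marks \emph{all} vertices of that color that can reach some viable successor along every outgoing template edge (and, for in-edges from pinned vertices, are reachable from them). If every template color has at least one viable vertex it assembles a witness $H$ using \emph{all} viable vertices per color; this $H$ need not be colorful, but each vertex's outdegree is at most the template outdegree (hence $\le T$), and following the template's color sequence from any pinned source reaches the pinned destination. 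The colorful hypothesis is used only in the converse direction, to argue that an actual colorful solution makes each of its vertices viable.

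So the missing ingredients are the SCC normalization and the ``mark all viable vertices, then build a possibly non-colorful witness'' trick; your single-embedding formulation is a list-homomorphism style CSP that you have not shown to be polynomial. A minor omission: the paper disposes of $T\le 1$ up front via \cref{thm:fixed_target1} before running the template machinery.
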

\begin{proof}
	Let $(D,K,T)$ be an \pvar instance and $k := |K|$.
	If $T\le 1$, we solve the instance using \cref{thm:fixed_target1}.
	Hence, we can assume that $T \geq 2$.
	We start by giving an algorithm to solve the colorful variant, where the vertices in $D$ are colored by $2^k+k$ colors and any solution can use each color at most one time. Formally, a coloring of a set $V$ of vertices is a function $c\colon V \rightarrow [2^k+k]$, so instances for \textsc{COLORFUL-}\pvar are of the form $(D,K,T,c)$.
	
	Let $S$ and $R$ be the set of sources of commodities in $K$ and the set of sources or destinations of commodities in $K$, respectively (i.e., $S = \set{s\mid (s,t)\in K}$ and $R = S\cup \set{t\mid (s,t)\in K}$). 
	We now prove that any colorful YES-instance has a solution graph $H'$ in which each SCC either has size one or contains a vertex in $S$.
	Let $H$ be any solution graph for $(D,K,T,c)$ with an SCC $C$ without a vertex in $S$ and $|C|\ge 2$.
	Consider the solution graph $H'$ obtained from $H$ by replacing $H[C]$ by a path of the vertices in $C$ in an arbitrary order and all edges from $V(D)\setminus C$ to $C$ are redirected to end in the first vertex of the path. Edges from $C$ to $V(D)\setminus S$ remain unchanged. 
	In the original SCC, each vertex has at least one outgoing edge inside the SCC, so this modification does not increase the outdegree of any vertex.
	Further, $H'$ satisfies each commodity $(s,t)$: 
	For a commodity $(s,t \in K)$, if $H$ has an $s$-$t$-path that did not enter $S$, then the same path exists in $H'$. 
	Otherwise, since $s\in S$, $s\notin C$. 
	Thus there is a path from $s$ to the first vertex in the newly created path and hence a path from $s$ to each vertex in $C$. 
	Thereby, $s$ can reach every vertex that it could reach in $H$. 
	Hence, $H'$ is a solution graph to $(D,K,T,c)$ by additionally noting that $H'$ is colorful since $V(H')=V(H)$.
	As $H'$ did not introduce any new SCC of size at least two, using the above procedure exhaustively, we obtain a solution graph where each SCC of size at least two contains a vertex in $S$.

	We construct an algorithm that works on all colorful instances $(D,K,T,c)$ that require $c$ to assign a unique color to each vertex $v$ in $R$ (i.e., there are no other vertices in $D$ that share the same color as $v$).  	We now branch on all possible \emph{templates} for a solution. 
	Formally, a template $\Gamma$ is a graph of up to $2^k+k$ vertices such that $R \subseteq V(\Gamma)$, all vertices in $V(\Gamma)$ have a unique color from $[2^k+k]$ and vertices in $R$ have the same color in $\Gamma$ as in $D$.
	Further, we only consider templates that are the union of one $s$-$t$-path per commodity $(s,t)\in K$, which implies that each vertex has outdegree at most $k$.
	There are fewer than $\binom{2^k+k}{k}^{2^k+k} \cdot (2^k+k)!$ many such templates.
	Discard a template $\Gamma$ if 
	\begin{itemize}
		\item its maximum outdegree is larger than $T$,
		\item it contains an SCC of size at least 2 without a vertex from $S$, or
		\item there are $v,w\in R$ such that $(v,w)\in E(\Gamma)$ but $(v,w)\notin \tc(D)$. 	\end{itemize} 
	For every remaining template $\Gamma$, we verify whether it yields a solution to the colorful $(D,K,T,c)$ instance:
	We mark all vertices in $R$ as \emph{viable}.
	Consider a topological sorting of the SCCs in $\Gamma$.
	Construct a total order of the vertices in $V(D)\setminus R$ according to a linearization of that topological sorting, where vertices inside an SCC of size at least 2 are placed in arbitrary order at the position of the SCC.
	We visit the vertices in that order from back to front; that is, we start with a vertex with no out-edges except inside its SCC or to $R$.

	Let the currently visited vertex $v'$ in $\Gamma$ have color $a$. 
	We mark every vertex $v$ in $D$ of that color as \emph{viable} for which $v$ satisfies the following requirements: 
\begin{itemize}
	\item If $v'$ is in an SCC $C$ of size at least 2, let $s$ be any vertex in $C\cap S$. We require $(v,s)$ and $(s,v)$ to be edges in $\tc(D)$. 
	\item Otherwise, for each edge $(v',w')\in E(\Gamma)$ we require that there is a viable vertex $w$ in $D$ of the same color as $w'$ in $\Gamma$ such that the edge $(v,w)$ is in $\tc(D)$.
	For each edge $(w',v')\in E(\Gamma)$ with $w'\in R$ we additionally require $(w',v)\in\tc(D)$.
\end{itemize}
	Note that in the sorting, all edges between vertices that are not part of the same SCC go from left to right. Hence, whenever we look for a viable vertex $w$, viability for all vertices with the color of $w'$ has already been decided.
	We have a colorful YES-instance if and only if for some template $\Gamma$ there is a viable vertex for each color used in $\Gamma$.

	\begin{figure}
		\includegraphics[page=1,scale=1.2]{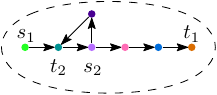}
		\hspace{2cm}
		\includegraphics[page=6,scale=1.2]{color_coding_algo.pdf}\\
		\bigskip
		 \includegraphics[page=2,scale=1.2]{color_coding_algo.pdf}
		\hfill
		\includegraphics[page=3,scale=1.2]{color_coding_algo.pdf}
		\hfill
		\includegraphics[page=4,scale=1.2]{color_coding_algo.pdf}
		\hfill
		\includegraphics[page=5,scale=1.2]{color_coding_algo.pdf}
		\caption{Visualization of the algorithm for \cref{thm:pvar_commodities_fpt}.
		A template $\Gamma$ is displayed in the top-left. The bottom row shows how vertices in $D$ are marked as viable (square) or not viable (cross). First, only sources and destinations are marked as viable (left). Only the \textcolor{cb_blue}{blue} vertex $e$ has a path to the \textcolor{cb_orange}{orange} vertex $t_1$ and is marked as viable (center-left). The \textcolor{cb_pink}{pink} vertex $d$ is viable as it is reachable from $s_2$ and has a path to the viable \textcolor{cb_blue}{blue} vertex $e$ (center-right). As both \textcolor{cb_purple}{purple} vertices $b$ and $c$ share an SCC with $s_2$ in both $\gamma$ and $D$, both are viable (right).
		Choosing out-edges according to the template might result in a solution graph $H$ as displayed in the top-right. Note that both viable \textcolor{cb_purple}{purple} vertices have an edge to $t_2$ but only (arbitrary) one of them has $s_2$ as an in-neighbor.}
		\label{fig:color_coding_rspp_by_comm}
	\end{figure}

	As typical for the color-coding technique~\cite[Subsections 5.2 and 5.6]{CyganFKLMPPS15}, this algorithm is used on each member of a suitable family of colorings for $D$.
	Here, as the vertices in $R$ need to be part of every solution, we first fix the colors of vertices in $R$ arbitrarily (but without repetition) in all colorings. The members of the family then only differ in the way they assign the remaining $2^k+k-|R|$ colors to the remaining vertices.
	The original uncolored instance $(D,K,T)$ is a YES-instance if and only if any member of the family yields a YES-instance for the colorful instance.
	This approach takes $2^{O(2^k k)} n^{O(1)}$ time in total. 

	Suppose the algorithm decides for a YES-instance witnessed by a certain template $\Gamma$. 
	We construct a colorful solution graph $H$ as follows and note that even though it is build on $\Gamma$, it may diverge from the structure proposed by the template.
	Initially, $H$ is empty.
	Add all viable vertices of every color to $H$.
	Iterate over the vertices in $\Gamma$. 
	For the current vertex $v'$, let $a$ be its color.
	For each $(v',w')\in E(\Gamma)$ add edges as follows.
	Let $a_{w'}$ be the color of $w'$, and for each viable vertex $v$ of color $a$ add the edge $(v,w)$ to $H$, where $w$ is an arbitrary viable vertex of color $a_{w'}$ such that $(v,w)$ is an edge in $\tc(D)$.
	Such a vertex always exists for the following reasons.
	\begin{itemize}
	\item If $v'$ and $w'$ are in the same SCC $C$ in $\Gamma$, for every viable vertex $u$ of color $a$ or $a_{w'}$ we have that $(s,u)$ and $(u,s)$ are edges in $\tc(D)$ for a vertex $s\in C\cap S$. Hence, for every viable vertex $w$ of color $a_{w'}$ we have the edge $(v,w)$ in $\tc(D)$. There is at least one viable vertex, so we can pick that one.
	\item If $v',w'\in R$, we have $(v',w')=(v,w)\in E(\tc(D))$ as we would have discarded the template otherwise.
	\item If $v'\in R, w'\notin R$, then $v=v'$ and for every viable vertex $w$ of color $a_{w'}$ we have $(v,w)\in E(\tc(D))$ by the definition of viability. There is at least one viable vertex of color $a_{w'}$, so we can pick that one.
	\item Last, if $v'$ and $w'$ are not in the same SCC and $v'\notin R$, the viability of $v$ ensures that there is a suitable vertex $w$.
	\end{itemize}
	After this procedure, the outdegree of each vertex in $H$ is at most the outdegree of the vertex with the same color in $\Gamma$ and thus at most $T$. 
	Further, for all pairs of colors $a,b$ such that there is an edge from color $a$ to color $b$ in $\Gamma$ we have that \emph{all} vertices of color $a$ in $H$ have an edge to a vertex with color $b$ in $H$.
	Hence, for every commodity $(s,t)\in K$ there is an $s$-$t$-path in $H$ by starting at a vertex with the color of $s$ and iteratively taking an edge to a vertex with the next color on the path that satisfies $(s,t)$ in $\Gamma$ until we arrive at~$t$.
	Thus, $H$ witnesses $(D,K,T,c)$ and $(D,K,T)$ to be YES-instances of the corresponding problems.
	
	For the other direction assume there is a solution graph $H$ to $(D,K,T)$.
	By \cref{lem:commodities_bounded_sol} we can assume $H$ to consist of at most $2^k+k$ vertices. 
	By the above arguments we can further assume every SCC of size 2 or larger in $H$ to contain at least one vertex in $S$ (note that this does not increase the number of vertices and $H$ still consists of one $s$-$t$-path per commodity $(s,t)\in K$).
	We will try at least one colorful problem instance $(D,K,T,c)$ where $c$ assigns distinct colors to all vertices in $H$.
	Consider the template $\Gamma = D[V(H)]$ and note that it is not discarded.
	Further, note that each vertex in $H$ is marked viable w.r.t. $\Gamma$. Hence, the algorithm correctly decides that $(D,K,T$) is a YES-instance. 
	See \cref{fig:color_coding_rspp_by_comm} for an overview on the algorithm.
\end{proof}

\section{Structural Parameters}
\label{sec:structural}
In our final section, we turn towards a graph-structural analysis of the considered parcel sortation problems. At first glance, it may seem difficult to identify a ``reasonable'' graph-structural parameter that could be used to solve  \pfixed\ or \pvar: the \NP-hardness of both problems on an orientation of a star (\cite{VanDyk_2023} in combination with \cref{fact:pvar_pfix_pathlen_tw}) rules out not only the use of treewidth and its directed variants~\cite{GanianHKLOR14,GanianHK0ORS16}, but also 
 a range of other much more restrictive parameterizations such as treedepth~\cite{sparsity} and the vertex cover number~\cite{BhoreGMN20,CyganLPPS14,Zehavi17}.
 
The above situation is not unique: there is a well-known example of another routing problem---\textsc{Edge Disjoint Paths}---which suffers from a similar difficulty, notably by being para\NP-hard even on undirected graphs with vertex cover number $3$~\cite{FleszarMS18}. But for \textsc{Edge Disjoint Paths}, one can achieve fixed-parameter tractability when parameterizing by the combination of treewidth and the maximum degree~\cite{GanianOR21}, while for \pfixed\ and \pvar\ we can exclude tractability even under this combined parameterization by recalling~\cref{thm:pvar_pfix_deg_tw}. 

The above considerations raise the question: Are there any structural restrictions under which we can efficiently solve instances of \pfixed\ or \pvar\ involving a large number of commodities and target? As the main contribution of this section, we answer this question positively by identifying the \emph{maximum route length} as the missing ingredient required for tractability. Indeed, from an application perspective, restricting one's attention only to physical routes which do not involve too many intermediary nodes is well-aligned with practical considerations for the routing of most goods. For the purposes of this section we will assume that the parameterized instances of \pfixed\ and \pvar\ come equipped with an additional integer parameter \pathl called \emph{path length} and require:
\begin{itemize}
\item for \pfixed, the path $P$ of every routed commodity has length at most \pathl;
\item for \pvar, for each commodity $(s,t)\in K$, there is a directed $s$-$t$-path $P'$ in $H$ with a corresponding \emph{base path} $P$ in $D$ (formally, $P'\subseteq \tc(P)$), and $P$ has length at most $\pathl$. To formally distinguish it from \pvar, we call this enriched problem \pvarp (the problems coincide for $\pathl\geq |V(G)|$).
 \end{itemize}

As our first result, we establish the fixed-parameter tractability of both problems of 
interest when \pathl\ is included in the parameterization.
  
\begin{theorem}
\label{thm:deg_pathlen_tw}
	\pfixed and \pvarp are in \fpt w.r.t.\ the combination of the treewidth and maximum degree of $\undund{D}$ and the path length.
\end{theorem}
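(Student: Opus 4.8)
The plan is to reduce both problems to a bounded-treewidth dynamic program by exploiting the bounded path length to shrink the relevant part of the transitive closure.

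First I would define the \emph{$\pathl$-bounded transitive closure} $\tc_\pathl(D)$: the digraph on $V(D)$ with an edge $(u,v)$ exactly when $D$ has a directed $u$-$v$-path of length at most $\pathl$. Two facts make this object central. (i) Every witness path uses only edges of $\tc_\pathl(D)$: for \pfixed it lies in $\tc(P)$ with $|P|\le\pathl$, and for \pvarp it lies in $\tc(P)$ for a base path $P$ of length at most $\pathl$, so each of its edges joins two vertices connected by a subpath of a length-$\le\pathl$ path. Taking the union of one witness path per commodity, every YES-instance thus admits a solution graph $H\subseteq\tc_\pathl(D)$. (ii) In $\undund{\tc_\pathl(D)}$ every edge joins two vertices at distance at most $\pathl$ in $\undund{D}$, so its maximum degree is at most $\sum_{i=1}^{\pathl}\Delta^i$ (a function of $\Delta$ and $\pathl$ only), and $\undund{\tc_\pathl(D)}$ is a subgraph of the $\pathl$-th power of $\undund{D}$. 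Since the $\pathl$-th power of a graph of treewidth $\tau$ and maximum degree $\Delta$ has treewidth bounded in $\tau,\Delta,\pathl$ (augment each bag by the radius-$\pathl$ ball around each of its vertices), $\undund{\tc_\pathl(D)}$ has treewidth bounded in our three parameters. Finally, since $H\subseteq\tc_\pathl(D)$ forces $\delta^+(H)$ below the bounded maximum out-degree $M$ of $\tc_\pathl(D)$, I may assume $T<M$: if $T\ge M$ then $\tc_\pathl(D)$ itself is a valid solution graph whenever every commodity admits an admissible route, which for \pfixed always holds and for \pvarp is a direct reachability check. Hence $T$ is bounded by a function of the parameters as well.

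The second ingredient is locality of commodities. Using the bag-augmentation above, I would take a nice tree decomposition $(\mathtt{T},\chi)$ of $\undund{\tc_\pathl(D)}$ in which some bag contains the entire radius-$\pathl$ ball $B$ around each commodity source $s$. Every admissible witness path for a commodity with source $s$ has all its vertices inside $B$ (they lie on a length-$\le\pathl$ path starting at $s$); for \pfixed the set $V(P)$ is even a clique of $\undund{\tc_\pathl(D)}$ and so automatically lies in a common bag. Consequently, whether a fixed $H$ satisfies a given commodity depends only on the restriction of $H$ to one bag, and can be tested locally: for \pfixed by checking reachability along the selected edges of $\tc(P)$, and for \pvarp by enumerating the $\bigO{\Delta^\pathl}$ candidate base paths out of $s$ and checking reachability within each.

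I would then run a leaf-to-root dynamic program over $(\mathtt{T},\chi)$ whose state at a node $t$ records (a) which edges of $\tc_\pathl(D)$ among the vertices of $\chi(t)$ are placed in $H$, and (b) for each $v\in\chi(t)$ its number of out-edges to already-forgotten vertices, capped at $T+1$. Using the bounded bag size and the bounded degree of $\tc_\pathl(D)$, both components range over sets whose number is bounded purely in the parameters, so there are at most $f(\tau,\Delta,\pathl)$ states per node. The introduce, forget and join transitions are the standard ones for edge-selection problems; at a forget node I finalize and test the out-degree of the departing vertex against $T$, and I verify every commodity whose source ball $B$ is fully present in the current bag using the local tests above on the stored edge set. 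As each bag has bounded size and there are $\bigO{|V(D)|}$ nodes, the procedure runs in $f(\tau,\Delta,\pathl)\cdot|V(D)|^{\bigO{1}}$ time, yielding fixed-parameter tractability for both \pfixed and \pvarp.

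The hard part will be the bookkeeping that makes the local commodity tests sound: I must ensure each commodity is verified exactly once, at a node where the stored edge set already reflects \emph{all} edges of the eventual $H$ among its source ball (so no satisfying witness is missed or spuriously created later), and that out-degrees are accumulated across join nodes without double-counting edges shared by the two branches. Getting the augmented nice tree decomposition and the edge-ownership convention to cooperate with these per-commodity checks—together with the case distinction handling unknown base paths for \pvarp—is where the technical care concentrates; the bounded ball size and the clique structure of short paths are exactly what confine every such check to a single bounded bag.
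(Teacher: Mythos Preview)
Your proposal is correct and follows essentially the same strategy as the paper: both confine the solution to edges realizable by length-$\le\pathl$ paths, enlarge each bag of a tree decomposition of $\undund{D}$ by radius-$\pathl$ balls so that every commodity becomes locally checkable, and run a leaf-to-root DP whose state records the edges of $H$ inside the enlarged bag together with out-degree counts. The only cosmetic difference is that you first name the auxiliary digraph $\tc_\pathl(D)$ and invoke the treewidth bound for graph powers, whereas the paper works directly with the enlarged neighborhoods $N(b)$ of bags in a nice decomposition of $\undund{D}$; the resulting record sets and transitions coincide.
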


\begin{proof}
Let $(D, K, T)$ be an \pvarp or \pfixed instance with path length $p$.
Let $tw$ and $\delta_{\max}$ be the treewidth and maximum degree of $\undund{D}$, respectively. 

We first compute a width-optimal nice tree-decomposition $(\mathtt{T},\chi)$ using known techniques~\cite{BodlaenderK96,KorhonenL23}.
So, for the following consider a nice tree decomposition $(\mathtt{T},\chi)$ of $\undund{D}$ with treewidth at most $tw$.
We use the standard dynamic programming technique that starts at the leaves of the decomposition and traverses up to the root.
  For each node~$b$ of $\mathtt{T}$, let $N(b)$ denote the set of all vertices that are in $\chi(b)$ or have distance at most $p$ in $\undund{D}$ to a vertex in $\chi(b)$.
We define a \emph{state} with respect to a node $b$ as a pair $(d,R)$ of a function $d: N(b) \to \set{0, \dots, T}$ and a directed graph $R$ over the vertices in $N(b)$ with $\delta^+(R) \le T$.
We say that a state $(d,R)$ is a \emph{candidate} if there is a graph $H_d^R$ such that
\begin{enumerate}[(a)]
	\item  $H_d^R \subseteq \tc(D[c(b)\cup N(b)])$; 
	\item Every vertex has outdegree at most $T$ in $H_d^R$; 
	\item For \pvarp, for every commodity $(s,t) \in K$ such that $s\in c(b)$ or $t\in c(b)$, there is a directed $s$-$t$-path in $H_d^R$ with a base path of length at most $p$ in $D$ and for \pfixed, for every commodity $(s,t,P) \in K$ such that $s\in c(b)$ or $t\in c(b)$, there is a directed $s$-$t$-path in $H_d^R\cap \tc(P)$;
	\item $R = H_d^R[N(b)]$;
	\item Every vertex~$v \in N(b)$ has outdegree $d(v)$ in $H_d^R$.
\end{enumerate}

Clearly, if the root $r$ has a candidate $(d,R)$, then $H_d^R$ is a solution graph for $(D,K,T)$.
Further, if the instance $(D,K,T)$ is solved by some $H\subseteq \tc(D)$, the root has a candidate $(d,R)$, where $d$ assigns each vertex in $N(r)$ its outdegree in $H$ and $R = H[N(r)]$.
Thus, identifying whether the root has a candidate suffices to decide the instance.
We use the following dynamic program to identify all candidates at each node of $\mathtt{T}$.

At every node, we create a table that indicates whether a state $(d,R)$ of the node is a candidate, defined recursively as follows.
\begin{itemize}
	\item For a leaf, the only candidate is $(d,R)$, where $d$ is an empty function and $R$ the graph with $V(R)=\emptyset$. 
	 	 	\item For an \textsf{introduce} node~$\thickbar{b}$ that introduces a vertex~$v$ into a bag $\chi(b)$, a state $(\thickbar{d}, \thickbar{R})$ of $\thickbar{b}$ is a candidate if there exists a candidate $(d,R)$ of $b$, such that 
	\begin{itemize}
		\item $\thickbar{R} \subseteq \tc(D[N(\thickbar{b})])$; 
		\item $E(\thickbar{R}) \cap (N(b)\times N(b)) = E(R)$;
		\item For every $u \in N(b)$, we have $\thickbar{d}(u) = d(u) + |\set{(u,w)\in E(\thickbar{R}) \mid w\in  N(\thickbar{b}) \setminus  N(b)}| \le T$;
		\item For every $u\in N(\thickbar{b})\setminus N(b)$, we have $\thickbar{d}(u) \le T$ and $\thickbar{d}(u) = \delta^+_{\thickbar{R}}(u)$; 
		\item For \pvarp, for every $(s,v)\in K$ (or $(v,s)\in K)$ there is a directed $s$-$v$-path ($v$-$s$-path) in $\thickbar{R}$ with a base path in $D$ of length at most $p$, and for \pfixed, for every $(s,v,P)\in K$ (or $(v,s,P)\in K)$ there is a directed $s$-$v$-path ($v$-$s$-path) in $\thickbar{R}\cap \tc(P)$.
	\end{itemize}
	 	 	\item For a \textsf{forget} node~$\thickbar{b}$ that removes a vertex~$v$ from a bag $\chi(b)$, a state $(\thickbar{d}, \thickbar{R})$ of $\thickbar{b}$ is a candidate, if there exists a candidate $(d,R)$ of $b$, such that $\thickbar{R} = R[N(\thickbar{b})]$ and for every $u \in N(\thickbar{b})$, $\thickbar{d}(u) = d(u)$.
	 	 	\item For a \textsf{join} node~$\thickbar{b}$ that joins two nodes~$b$ and~$b'$, a state~$(\thickbar{d}, R)$ of~$\thickbar{b}$ is a candidate, if there exist candidates~$(d, R)$ and~$(d', R)$ of $b$ and $b'$, respectively, such that for all
	 $v \in N(\thickbar{b})$, $\thickbar{d}(v) = d(v) + d'(v) - \delta_R^+(v)$ and $\thickbar{d}(v) \le T$.
	\end{itemize}

We assume $\delta_{\max} > 1$, because otherwise $D$ is a union of disjoint edges, so the problem can easily be solved in linear time.
Then for each vertex~$v$ of $D$, $N(v)$ is loosely bounded by $\delta_{\max}^{p + 1}$.
Consider any node $b$.
We have $|\chi(b)| \le tw +1$ and $|N(b)| \le (tw+1)\delta_{\max}^{p+1}$.
For every state of $b$, we have that $d$ takes one of $T+1$ possible values for every $v\in N(b)$ and there are at most $(\sum_{i=0}^{T+1} \binom{N(b)}{i})^{N(b)}$ possible graphs for $R$. 
Therefore, the number of states of $b$ is less than
\[
	(T+1)^{|N(b)|} \cdot \left(\sum_{i=0}^{T+1}\binom{N(b)}{i}\right)^{N(b)}
	< (\delta_{\max}+1)^{(tw+1)\delta_{\max}^{p + 1}} \cdot 
	\left(\sum_{i=0}^{\delta_{\max}+1}\binom{(tw+1)\delta_{\max}^{p + 1}}{i}\right)^{(tw+1)\delta_{\max}^{p + 1}}
\]
where by \cref{obs:bounded_target}, we assume that $T < \delta_{\max}$.
Hence, the size of each table is bounded by some computable function $f(tw, \delta_{\max}, p)$.
Further, we note that by the above rules computing all candidates of each node in a dynamic programming manner from the leaves to the root takes \fpt time.

We conclude the proof by noting that the above algorithm correctly identifies the candidates at each node as shown in the claim below.

\begin{claim}\label{lem:candidates_correct}
	The algorithm correctly identifies all candidates for each node.
\end{claim}
 	We prove by induction along the tree decomposition.
	For some directed graph $H$ and a set of vertices $S\subseteq V(H)$ we define the graph $H[S]_{\tc} := \big(S,  \, E(H) \cap \tc(D[S])\big)$; that is, it is the subgraph of $H$ on $S$ that only keeps edges in the transitive closure of $D[S]$. 

	\paragraph*{Leaves.} Here, the only candidate is the empty function and the graph with no vertices.

	\paragraph*{\textsf{Introduce} Nodes.} Suppose node $\thickbar{b}$ introduces a vertex~$v$ into a bag $X(b)$.
	Suppose the algorithm sets $(\thickbar{d}, \thickbar{R})$ to be a candidate of $\thickbar{b}$ as witnessed by some $(d, R)$.
	By the induction hypothesis, $(d,R)$ is a candidate for $b$ witnessed by some graph $H_d^R$.
	We create a graph $H_{\thickbar{d}}^{\thickbar{R}}$ by combining $H_d^R$ and $\thickbar{R}$. In particular, $V(H_{\thickbar{d}}^{\thickbar{R}}) = c(\thickbar{b}) \cup N(\thickbar{b}) \supseteq V(H_d^R)$ and $E(H_{\thickbar{d}}^{\thickbar{R}}) = E(H_d^R) \cup E(\thickbar{R})$.
	Then (a) and (d) hold by construction and (b),(c), and (e) hold by combining the properties of $H_d^R$ with the five properties of $\thickbar{d}$ and $\thickbar{R}$ as defined in the algorithm's \textsf{introduce} step.
	
	For the other direction, let node $\thickbar{b}$ have a candidate $(\thickbar{d}, \thickbar{R})$ witnessed by some $H_{\thickbar{d}}^{\thickbar{R}}$. 
	Let $H_d^R = H_{\thickbar{d}}^{\thickbar{R}}[c(b)\cup N(b)]_{\tc}$, $d(v) = \delta^+_{H_d^R}$ for every $v\in N(b)$, and $R = H_d^R[N(b)]$.
	Then $(d,R)$ is a candidate for node $b$ as witnessed by $H_d^R$, where (a), (d), and (e) hold by construction, while (b) and (c) hold due to the candidate properties of $H_{\thickbar{d}}^{\thickbar{R}}$.
	In particular, for (c), we remark that all paths starting or ending in $X(b)$ of length at most $p$ only use vertices in $N(b)$ and thereby do not use any of the removed edges.
	By induction, the algorithm identifies $(d,R)$ as a candidate for node $b$.
	What remains to be shown is that the algorithm identifies $(d,R)$ as a witness for $(\thickbar{d},\thickbar{R})$ being a candidate of $\thickbar{b}$.
	To see this, first note that $\thickbar{R} \subseteq \tc(D[N(\thickbar{b})])$ as $(\thickbar{d},\thickbar{R})$ is a candidate. 
	By construction, $E(\thickbar{R}) \cap (N(b)\times N(b)) = R$ and, as $v\in c(\thickbar{R})$, condition (c) for $(\thickbar{d},\thickbar{R})$ implies that all commodities starting or ending in $v$ are satisfied by $\thickbar{R}$.
	Further, (b) gives that $\thickbar{d}(u) \le T$ for all $u\in N(\thickbar{b})$ and (e) gives for all $u\in N(\thickbar{b})\setminus N(b)$ that $\thickbar{d}(u) = \delta^+_{\thickbar{R}}(u)$, as all neighbors of $u$ are in $\thickbar{R}$.
	Last, for vertices in $N(b)$, by construction we have that $\thickbar{d}$ equals $d$ with the additional outdegree with edges to or from $N(\thickbar{b})\setminus N(b)$.
  	
	\paragraph*{\textsf{Forget} Nodes.} Suppose a node $\thickbar{b}$ removes a vertex~$v$ from a bag $X(b)$.
	Suppose the algorithm sets $(\thickbar{d}, \thickbar{R})$ to be a candidate of $\thickbar{b}$ as witnessed by some $(d, R)$.
	By induction, $(d,R)$ is a candidate for $b$ with a respective graph $H_d^R$.
	Let $H_{\thickbar{d}}^{\thickbar{R}} = H_d^R$.
	Then (b)--(e) hold by construction and (a) follows by $N(b) \setminus N(\thickbar{b}) \subseteq c(b)$, which holds as follows.
	Let $u\in N(b) \setminus N(\thickbar{b})$. 
	Then there is a path of length at most $p$ from $u$ to $v$ that does not visit any vertices in $X(\thickbar{b})$. Let $(u',v)$ be the last edge of this path. Thus, by the tree decomposition, $u'$ and $v$ have to share some bag and as $v\notin X(\thickbar{b})$, we have $u'\in c(b)$. By the same reasoning, as the path from $u$ to $u'$ does not pass through $X(b)$, we have $u\in c(b)$.
	
	For the other direction, let node $\thickbar{b}$ have a candidate $(\thickbar{d}, \thickbar{R})$ witnessed by some $H_{\thickbar{d}}^{\thickbar{R}}$. 
	Let $H_d^R = H_{\thickbar{d}}^{\thickbar{R}}$, for every $v\in N(b)$ let $d(v) = \delta_{H_d^R}^+(v)$, and $R = H_d^R[N(b)]$.
	Then $(d,R)$ is a candidate for node $b$ as witnessed by $H_d^R$, where (a) holds as $c(b)=c(\thickbar{b})$ and $N(\thickbar{b}) \subseteq N(b)$, and $(b)$--$(e)$ hold by construction.
	By induction, the algorithm identifies $(d,R)$ as a candidate for node $b$.
	Further, as $\thickbar{R} = R[N(\thickbar{b})]$, the algorithm identifies $(\thickbar{d},\thickbar{R})$ as a candidate for node $\thickbar{b}$ as witnessed by $(d,R)$.

	\paragraph*{\textsf{Join} Nodes.}  Suppose a node $\thickbar{b}$ joins two nodes $b$ and $b'$.
	Suppose the algorithm sets $(\thickbar{d}, R)$ to be a candidate of $\thickbar{b}$ as witnessed by some $(d, R)$ and $(d',R)$ that are, by induction, candidates of $b$ and $b'$, respectively.
	Let $H_d^R$ and $H_{d'}^R$ be the graphs witnessing them as candidates, respectively.
	Define $H_{\thickbar{d}}^R$ as their union graph.
	Conditions (a), (c), and (d) follow immediately from the candidate properties of $H_d^R$ and $H_{d'}^R$.
	For condition (b), let $v\in c(\thickbar{b}) \cup N(\thickbar{b})$.
	Note that $H_{d}^R, H_{d'}^R$, and $H_{\thickbar{d}}^R$ share the induced subgraph $R$.
	As $H_d^R$ and $H_{d'}^R$ fulfill (b), $\delta_{H_{\thickbar{d}}^R}^+(v) > T$ would imply that  $\Delta_{H_{\thickbar{d}}^R}^+(v) \setminus N(\thickbar{b}) \cap c(b) \neq \emptyset$ and $\Delta_{H_{\thickbar{d}}^R}^+(v) \setminus N(\thickbar{b}) \cap c(b') \neq \emptyset$. 
	In particular, $v$ would have edges to both $c(b)\setminus X(\thickbar{b})$ and $c(b')\setminus X(\thickbar{b})$. 
	By the tree decomposition this implies $v\in X(\thickbar{b})$, which in turn gives that all its neighbors are in $N(\thickbar{b})$.
	Thus there can be no such $v$ and (b) holds.
	For condition (e), let $v\in N(\thickbar{b})$. 
	Note that
	$\Delta_{H_{\thickbar{d}}^{\thickbar{R}}}(v) \subseteq  V(H_d^R) \cup V(H_{d'}^R)$, 
	$|\Delta_{H_{\thickbar{d}}^{\thickbar{R}}}(v) \cap V(H_d^R)| = d(v)$, and
	$|\Delta_{H_{\thickbar{d}}^{\thickbar{R}}}(v) \cap V(H_{d'}^R)| = d'(v)$.
	Further, 
	$\Delta_{H_{\thickbar{d}}^{\thickbar{R}}}(v) \cap V(H_d^R) \cap V(H_{d'}^R) = \Delta_R^+(v)$, as $u\in c(b)$ and $u\in c(b')$ imply $u\in X(\thickbar{b})$ by the tree decomposition.
	Hence, $|\Delta_{H_{\thickbar{d}}^{\thickbar{R}}}(v)| = d(v)+d'(v)- \delta_R^+(v)$ and the algorithm correctly identifies $(\thickbar{d}, R)$ as a candidate.
	
	For the other direction, let node $\thickbar{b}$ have a candidate $(\thickbar{d}, R)$ witnessed by some $H_{\thickbar{d}}^R$. 
	Let $H_d^R = H_{\thickbar{d}}^R[c(b)\cup N(\thickbar{b})]_{\tc}$ and $H_{d'}^R = H_{\thickbar{d}}^R[c(b')\cup N(\thickbar{b})]_{\tc}$.
	Let $d$ and $d'$ assign every vertex in $N(\thickbar{b})$ its outdegree in $H_d^R$ or $H_{d'}^R$, respectively.
	Then $(d,R)$ and $(d',R)$ are candidates for nodes $b$ and $b'$, respectively, as witnessed by $H_d^R$ and $H_{d'}^R$, where (a), (b), (d), and (e) hold by construction and (c) holds as every path of length at most $p$ that uses a vertex in $c(b)$ (or $c(b')$) cannot use a vertex outside $c(b)\cup N(\thickbar{b})$ (or $c(b)\cup N(\thickbar{b})$) and no edge outside $\tc(D[c(b)\cup N(\thickbar{b})])$ (or $\tc(D[c(b)\cup N(\thickbar{b})])$).
	By induction, the algorithm correctly identifies $(d,R)$ and $(d',R)$ as candidates.
	By the same reasoning as above we have $\thickbar{d}(v) = |E_v| = d(v)+d'(v)- |\set{u\mid (v,u)\in E(R)}|$ for all $v\in N(\thickbar{b})$, so the algorithm correctly identifies $(\thickbar{d},R)$ as a candidate for node $\thickbar{b}$ with $(d,R)$ and $(d',R)$ as witnesses.
	   \end{proof}

Note that the constructions above can be used for a backtracking approach that allows us to reconstruct the graph $H_d^R$ for any candidate $(d,R)$.
Thus, the dynamic program described in the proof of \cref{thm:deg_pathlen_tw} can not only solve the decision problem but also output a solution graph~$H$.

We complement \cref{thm:deg_pathlen_tw} by showing that not bounding any one of the three parameters (treewidth, degree, and path length) yields \nphardness for both problems.
\cref{thm:pvar_pfix_deg_tw} in \cref{sec:target} already establishes this for instances of large path length. 
Van Dyk, Klause, Koenemann, and Megow~\cite{VanDyk_2023} proved \pfixed to be \nphard\ even when the input graph is an orientation of a (high-degree) star.
As in such graphs, each commodity has a 
  unique path connecting its source and destination, the same construction can be used for \pvarp.
\begin{fact} 	\label{fact:pvar_pfix_pathlen_tw}
		\pvarp and \pfixed with path length $p=2$ on stars are \nphard.
\end{fact}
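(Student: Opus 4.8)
The plan is to treat the two halves of the statement separately. The hardness of \pfixed on such stars is exactly the result of Van Dyk, Klause, Koenemann and Megow~\cite{VanDyk_2023}, so I would simply invoke it as a black box; the only genuine work is to transfer it to \pvarp. I would first recall the precise shape of their hard instances: the input digraph $D$ is an oriented star with a center $c$, a set of \emph{source leaves} each carrying a single edge into $c$, and a set of \emph{sink leaves} each carrying a single edge out of $c$, with every commodity running from a source leaf to a sink leaf. In particular no leaf is incident to both an in- and an out-edge, and every routed path has length exactly $2$, so $p=2$ is the relevant bound.

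The heart of the argument is a \emph{unique-path} observation. For a source leaf $s$ and a sink leaf $t$, the only $s$-$t$-path in $D$ is $(s,c,t)$, and its transitive closure $\tc((s,c,t))$ consists precisely of the edges $(s,c)$, $(c,t)$ and the shortcut $(s,t)$. I would spell out $\tc(D)$ explicitly: its edges are exactly $(s,c)$, $(c,t)$ and $(s,t)$ over all source leaves $s$ and sink leaves $t$, with no edges among source leaves, among sink leaves, into source leaves, or out of sink leaves. Consequently, for any commodity $(s,t)$ the set of $s$-$t$-paths $P'$ in a subgraph $H\subseteq\tc(D)$ that admit a base path of length at most $2$ is the same whether we work in \pfixed or \pvarp: in both cases $P'$ may only use edges of $\tc((s,c,t))$, so $P'$ is either the single shortcut edge $(s,t)$ or the two-edge path $s\to c\to t$.

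Given this, I would convert each \pfixed commodity $(s,t,P)$ of the hard instance into the \pvarp commodity $(s,t)$, keeping $D$ and $T$ unchanged, and argue that the two instances have identical solution sets: a subgraph $H\subseteq\tc(D)$ with maximum outdegree at most $T$ satisfies all \pfixed commodities if and only if it satisfies all \pvarp commodities with $p=2$, since the per-commodity feasibility conditions coincide edge-for-edge by the previous paragraph. Hence the reduction of~\cite{VanDyk_2023} is simultaneously a reduction to \pvarp, establishing that \pvarp is \nphard on stars with $p=2$.

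The only point requiring care---and the step I would treat as the main (if modest) obstacle---is verifying that \pvarp gains no additional routing freedom from the larger ambient graph $\tc(D)$: one must confirm that no $s$-$t$-path realizable in $H$ with a base path of length at most $2$ escapes the three edges of $\tc((s,c,t))$. This follows solely from the oriented-star structure forcing a unique underlying $s$-$t$-path, so once $\tc(D)$ is described explicitly the equivalence is immediate and no new combinatorial construction is needed.
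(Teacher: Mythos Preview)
Your proposal is correct and follows the same approach as the paper: the paper simply cites~\cite{VanDyk_2023} for \pfixed and then observes in one sentence that since each commodity has a unique $s$-$t$-path in the oriented star, the identical construction works for \pvarp with $p=2$. Your write-up elaborates the unique-path observation and the structure of $\tc(D)$ more carefully than the paper does, but the underlying argument is the same.
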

 Last, we prove \NP-hardness for instances with constant degree and path length by
 reducing from \threeSATtwo: in this \np-complete restriction of \threeSAT, each clause has exactly three literals and each literal occurs in exactly two clauses~\cite{Darmann_2021}.

\begin{theorem}
\label{thm:pvar_deg_pathlen}
	\pvarp is \nphard, even when restricted to instances with $T \leq 2$, path length at most $4$, and maximum degree in $\undund{D}$ at most $7$.
\end{theorem}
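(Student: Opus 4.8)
The plan is to reduce from \threeSATtwo, which \cref{thm:pvar_pfix_deg_tw}'s section notes is \npcomplete. Let $\phi$ be a formula with variables $x_1,\dots,x_n$ and clauses $C_1,\dots,C_m$ in which every clause has exactly three literals and every literal occurs in exactly two clauses; the last property is what will let me keep the maximum degree of $\undund{D}$ bounded, since no vertex ever needs fan-out beyond a small constant. Every gadget exploits the same combinatorial mechanism made available by $T=2$, which also drove the \threePart reduction of \cref{thm:pvar_pfix_deg_tw}: if a vertex $u$ is the source of three commodities with three \emph{private} destinations $d_1,d_2,d_3$ such that no $d_i$ can reach another $d_{i'}$ through a short route, then the outdegree bound forces $u$ to route at least two of these demands through a common intermediate vertex — a \emph{funnel}. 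The entire reduction is built so that a vertex can serve as a funnel exactly when a certain literal is set to true.

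\para{Variable gadget.} For each $x_i$ I introduce a \emph{selector} vertex $b_i$ with two out-branches $P_i$ (positive) and $N_i$ (negative), and I pre-commit one of its two out-slots with a \emph{forced edge}: a commodity $(b_i,y_i)$ whose destination $y_i$ is reachable from $b_i$ only by the direct edge, which therefore must lie in any solution graph $H\subseteq\tc(D)$. A second commodity $(b_i,r_i)$ then has its target $r_i$ reachable from $b_i$ only through $P_i$ or through $N_i$ (and by base paths of length at most $4$), so the one remaining out-slot of $b_i$ must point into exactly one branch; this encodes the truth value and, crucially, is \emph{exclusive}. I wire the chosen branch so that committing to $P_i$ delivers a ``blocking'' demand into the funnel vertices associated with the two occurrences of the literal $\neg x_i$ (closing them), while leaving the funnels for the two occurrences of $x_i$ free, and symmetrically for $N_i$. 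A funnel vertex is \emph{closed} precisely when its two out-slots are already consumed by such forced blocking edges, so it can no longer reach its clause destinations.

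\para{Clause gadget.} For $C_j=\ell_{j,1}\vee\ell_{j,2}\vee\ell_{j,3}$ I create a source $c_j$, three private sink destinations $d_j^1,d_j^2,d_j^3$, and commodities $(c_j,d_j^1),(c_j,d_j^2),(c_j,d_j^3)$. To forbid spurious shortcuts, the $d_j^k$ are genuine sinks and are pairwise unreachable from one another, and $c_j$ reaches each $d_j^k$ either directly or through the literal funnel $\lambda_{j,k}$ of $\ell_{j,k}$, where $\lambda_{j,k}$ (when \emph{open}) reaches two of the three sinks by base paths of length at most $4$. Since $c_j$ has only two out-slots for three sinks, it must use one slot on an open funnel that covers the two sinks it does not serve directly; such an open funnel exists if and only if at least one literal of $C_j$ is true. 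Thus $C_j$ is satisfiable by the assignment iff the three clause commodities are simultaneously routable, and $\phi$ is satisfiable iff the assembled \pvarp instance with $T=2$ is a \textup{YES}-instance. Consistency across the two occurrences of each literal follows because a single selector $b_i$ controls the blocking edges into all four funnels of $x_i$, and each literal occurring in exactly two clauses means $P_i$ and $N_i$ need fan-out only $2$, which (after splitting any high-degree vertex into a shallow constant-size tree) keeps $\undund{D}$ within the degree bound $7$ and all gadgets within diameter so that base paths never exceed length $4$.

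The correctness argument then splits into the usual two directions: from a satisfying assignment one reads off the branch and funnel routings directly, and conversely one argues that in any valid $H$ the forced edges pin down $b_i$ to a single branch and that an unsatisfied clause leaves $c_j$ with three live private destinations and only two out-slots, a contradiction. \textbf{The main obstacle} is not the Boolean logic but the interaction between the three caps: because $H$ may use \emph{any} transitive-closure edge, I must design every gadget so that no length-$\le 4$ shortcut in $\tc(D)$ lets a clause source bypass its intended funnel or lets a selector reach $r_i$ without committing a branch, while at the same time forcing exclusivity of each variable's choice and consistent propagation to both occurrences of each literal — and all of this under maximum degree $7$ and path length $4$. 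Verifying that the degree-reducing splits preserve both the short distances and the ``no-shortcut'' invariants is the delicate, case-heavy part of the proof.
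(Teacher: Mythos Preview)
Your high-level plan—reduce from \threeSATtwo, use $T=2$ to force a binary branch per variable, and make each clause require a ``free'' literal—is exactly the paper's strategy. But your concrete gadgets diverge substantially, and the piece you explicitly defer (``the delicate, case-heavy part'') is essentially the entire content of the proof.

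The paper's construction is considerably simpler than yours. There are no per-occurrence funnel vertices and no separate blocking commodities: the literal vertices $x,\bar x$ are \emph{shared} between the variable gadget and the two clause gadgets that use them. For a variable $x$, the vertex $s_x$ has three commodities $(s_x,t_x),(s_x,t'_x),(s_x,t''_x)$; with only two out-slots, $s_x$ must route through exactly one of $x,\bar x$, and that literal vertex then spends both of its out-slots on variable-side sinks and cannot help any clause. For a clause $c$ on literals $x,y,z$, the commodities are $(c,x),(c,y),(c,z),(c,t_c)$; a single helper $c'$ lets $c$ and $c'$ together emit $(c,c'),(c,x),(c',y),(c',z)$, so reaching $t_c$ requires one of $x,y,z$ to still have a free out-slot—i.e., to be the TRUE literal. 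Because the literal vertex is the same object in both gadgets, the ``blocking'' you try to engineer with extra demands is automatic.

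Against this, your scheme never actually specifies the blocking commodities. You commit one slot of $b_i$ to $y_i$ and the other to $P_i$ or $N_i$, so at most one commodity passes through the chosen branch; yet you need each of the two $\neg x_i$ funnels to lose \emph{both} of its out-slots. How that is forced is left open. Your three-sink clause gadget is also heavier than necessary: in the paper only one clause sink $t_c$ exists, and the ``funnel'' is simply the edge $(\text{literal},t_c)$. Finally, you rely on post-hoc ``splitting any high-degree vertex into a shallow constant-size tree'' to meet the degree and path-length caps, but splitting lengthens base paths, and you offer no argument that degree $7$ and path length $4$ can be achieved simultaneously while preserving all the no-shortcut invariants you need. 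In the paper these bounds are immediate from the explicit gadgets: the maximum-degree vertex is a literal vertex with $3$ in- and $4$ out-edges, and every base path is visibly short (the longest being $c\to c'\to\text{literal}\to t_c$). So while your strategy is correct, what you have is a plan rather than a proof, and the paper's shared-literal design removes precisely the complications you are struggling to control.
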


	\begin{figure}[t]
		\centering
	\includegraphics[page=1, scale=.84]{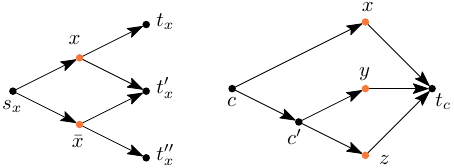}
	\caption{Variable gadget (left) and clause gadget (right) for the proof of~\cref{thm:pvar_deg_pathlen}. Orange colored vertices are shared between different gadgets.} 	\label{fig:pvar_deg_pathlen}
	\end{figure}

\begin{proof}
	Suppose we have an instance of \threeSATtwo with a set $C$ of $m$ clauses and a set $X$ of $n$ variables.
	Construct $D$ as follows.
	For every variable~$x\in X$, we use the variable gadget as in~\cref{fig:pvar_deg_pathlen} (left) and add the following commodities to $K$: $(s_x, t)$ for $t \in \set{t_x, t'_x, t''_x}$.
	For every clause~$c\in C$ over the literals $x, y$, and $z$, we use the clause gadget as in~\cref{fig:pvar_deg_pathlen} (right). There, the vertices of $x, y$, and $z$ correspond to the respective literal vertex ($x$ or $\thickbar{x}$) in the variable gadgets. 
	Further, we add the following commodities to $K$: $(c, t)$ for $t \in \set{x, y, z, t_c}$.
	Observe that literal vertices have the highest degree (3 in- and 4 out-edges) as they are part of one variable and two clause gadgets. 
	The constructed \pvarp instance with $\pathl = 4$ has a solution with target at most $2$ if and only if the \threeSATtwo instance is a YES-instance.

	Suppose we have a YES-instance of \threeSATtwo. 
	Consider the graph $H$ such that $V(H) = V(D)$ and
	\begin{align*}
		E(H) &= \bigcup_{x\in X} E_x \cup \bigcup_{c\in C} E_c, \\
		E_x &= \begin{cases}
			\set{(s_x, \thickbar{x}), (s_x, t_x), (\thickbar{x},t_x'), (\thickbar{x},t_x'')}, \quad \text{if $x$ is \True;}\\
			\set{(s_x, x), (s_x, t_x''), (x,t_x), (x,t_x')}, \quad \text{else;}\\
			\end{cases}\\ 
		E_c &= \set{(c,c'), (c,x), (c',y), (c',z), (a,t_c)}, 
	\end{align*} 
	where $a\in \set{x,y,z}$ is any \True literal in the clause. 
	Observe that $H$ is a subgraph of $\tc(D)$ and for every commodity $(s,t)\in K$ there is a directed $s$-$t$-path in $H$.
	Further, the outdegree of vertices $s_x, c$, $c'$ for a variable $x$ or clause $c$ are 2 and the outdegrees of $t_x, t_x', t_x'', t_c$ are 0. Each literal vertex ($x$ or $\thickbar{x}$) that is \False has 2 outdegree in its variable gadget and no outdegree towards a clause gadget vertex. 
	Each literal vertex that is \True has no outdegree inside its variable gadget and at most outdegree 1 in each of the two clause gadgets it appears in.
	Thus, $H$ is a solution for target 2.

	Now suppose there is a solution graph $H$ for the \pvar instance with target 2.
	Consider the gadget of a variable $x$. 
	To fulfill all commodities inside the gadget while preserving a maximum outdegree of 2 in $H$, we have $\set{(x,t_x), (x,t_x')} \subseteq E(H)$ or $\set{(\thickbar{x},t_x'), (\thickbar{x},t_x'')}\subseteq E(H)$. We let $x$ be \False if the first holds and $x$ be \True if only the second holds.
	This variable assignment from all variable gadget satisfies all clauses as follows.
	Consider any clause $c$ over the literals $x,y,z$. 
	To fulfill all commodities inside the gadget while preserving a maximum outdegree of 2 in $H$,
	we have $\set{(c,x), (c,c'), (c',y), (c',z)} \subseteq E(H)$.
	Then, $c$ and $c'$ cannot have an edge to $t_c$ in $E(H)$ without exceeding the target of 2.
	Thus, $(a,t_c)\in E(H)$ for at least one $a\in \set{x,y,z}$. 
	This literal has to be \True (and thereby satisfies the clause) because otherwise literal $a$ would have at least outdegree 3 in $H$ (two in its variable gadget and one in the clause gadget).
\end{proof} 

We modify the proof of~\cref{thm:pvar_deg_pathlen} using more sophisticated gadgets to account for the fixed paths and obtain a similar result for \pfixed.

\begin{theorem}
	\label{thm:pfixed_deg_pathlen}
	\pfixed is \nphard, even when restricted to instances with $T \leq 4$, path length at most $5$ and maximum degree in $\undund{D}$ at most $11$.
\end{theorem}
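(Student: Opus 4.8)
The plan is to reduce from \threeSATtwo and adapt the gadget construction from the proof of \cref{thm:pvar_deg_pathlen}, keeping its variable-and-clause architecture but redesigning the gadgets and assigning an explicit base path to each commodity. The guiding observation is that \pfixed\ behaves quite differently from \pvarp: for any commodity $(s,t,P)$ the edge $(s,t)$ always belongs to $\tc(P)$, so a commodity can always be satisfied by a single direct out-edge at its source. Thus the only genuinely contested resource is the outdegree of each vertex, and two commodities emanating from the same source $s$ can economize and share one out-edge $(s,v)$ only if their fixed paths both pass through the intermediate \emph{hub} $v$; this trades one unit of outdegree at $s$ for one unit at $v$. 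All gadgets must therefore be engineered so that the out-edge sharing required to meet the target $T\le 4$ is possible precisely when the encoded assignment is consistent and satisfies every clause.

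As in \cref{thm:pvar_deg_pathlen}, I would use the literal vertices $x,\bar x$ as the hubs shared between each variable gadget and the two clause gadgets in which the literal occurs, and give every commodity a base path of length at most $5$ that routes it through the appropriate hub. The variable gadget is arranged so that, under $T\le 4$, the source $s_x$ is forced to share out-edges through exactly one of $x,\bar x$, thereby saturating that literal's out-budget while leaving the other free; we read the saturated literal as \False\ and the free one as \True. The clause gadget carries, from its source $c$, commodities to the three literal vertices together with a sentinel commodity $(c,t_c)$ whose fixed path is routed so that $t_c$ can be reached only by exiting through some literal vertex; meeting the degree bound then forces this exit to use a literal that still has spare out-budget, i.e.\ a \True\ literal, which certifies that the clause is satisfied.

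To complete the argument I would (i) present the explicit gadgets together with the path assigned to each commodity, checking that every base path has length at most $5$ and that the maximum degree of $\undund{D}$ is at most $11$ (the literal vertices, which sit in one variable gadget and two clause gadgets, are the degree-critical vertices); (ii) establish the forward direction by converting a satisfying assignment into a subgraph $H\subseteq\tc(D)$ of maximum outdegree at most $4$ and exhibiting, for each commodity, the promised forward route inside the transitive closure of its fixed path; and (iii) establish the backward direction by showing that any solution of outdegree at most $4$ must saturate exactly one of $x,\bar x$ in every variable gadget (giving a well-defined assignment) and must route each sentinel $t_c$ through a non-saturated literal (certifying a satisfied literal in every clause).

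The step I expect to be the main obstacle is controlling the transitive-closure shortcuts so that they do not trivialize the gadgets. Because fixing a path $P$ makes every forward chord of $P$ available in $\tc(P)$---in particular the direct source-to-sink edge---a careless gadget lets the source satisfy all its commodities with direct edges and never touch a hub, collapsing the truth-value encoding. The sophistication lies in saturating each source's outdegree budget with \emph{unavoidable} routing obligations, so that the single remaining degree of freedom is exactly the assignment-dependent routing through the literal hubs, and in choosing the individual base paths so that no two of them share an unintended forward chord that would grant an unforeseen saving. Reconciling these requirements with the tight bounds $T\le 4$, path length at most $5$, and degree at most $11$ is the delicate core of the construction, and is precisely what makes the gadgets ``more sophisticated'' than those of \cref{thm:pvar_deg_pathlen}.
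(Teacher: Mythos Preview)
Your plan is correct and is essentially the paper's approach: reduce from \threeSATtwo, share the literal vertices between one variable gadget and two clause gadgets, and use outdegree saturation so that the literal consumed by $s_x$ reads as \False\ while the free literal retains enough budget to serve a clause. The paper's concrete gadgets differ from your single-sentinel sketch in one point worth flagging: instead of a lone commodity $(c,t_c)$ whose fixed path would have to thread through all three literals (which would add inter-literal edges and inflate degrees), the paper equips each clause with four auxiliary vertices $c,c',c'',\tilde c$ (each pre-loaded with two private sink commodities $(a,t^a_1),(a,t^a_2)$ plus the chaining commodities $(c,c'),(c',c'')$) and \emph{six} destination commodities, two per literal, each routed through that literal alone; a counting argument then shows that the combined budget of $c,c',c'',\tilde c$ leaves one pair of targets uncovered, forcing some literal to spend two out-edges in the clause, which is possible only if it spent at most two in its variable gadget. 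Correspondingly, the variable gadget is asymmetric with seven targets (three reachable via $x$, four via $\bar x$), so that under $T=4$ one literal necessarily absorbs at least three out-edges and the other at most two.
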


\begin{proof}
	Suppose we have an instance of \threeSATtwo with a set $C$ of $m$ clauses and a set $X$ of $n$ variables.
	Construct $D$ as follows.
	For every variable~$x$, we use the variable gadget as in~\cref{fig:pfixed_deg_pathlen} (left) and add the following commodities to $K$: $(s_x, t)$ for $t \in \set{t^x_1, t^x_2, t^x_3, t^{\thickbar{x}}_1, t^{\thickbar{x}}_2, t^{\thickbar{x}}_3, t^{\thickbar{x}}_4}$.
	The directed paths for these commodities are exactly the unique directed paths in the gadget.
	For every clause~$c$ over the literals $x,y$, and $z$, we use the clause gadget as in~\cref{fig:pfixed_deg_pathlen} (right) and add the following commodities to~$K$. 
	First, add $(a, t)$ for $a \in \set{c, c', c'', \tilde{c}}$ and $t \in \set{t^a_1, t^a_2}$.
	Second, add $(c,c')$ and $(c',c'')$.
	All of these commodities are assigned their unique respective path.
	Last, add $(c, t)$ for $t \in \set{t^{c,a}_1, t^{c,a}_2 | a \in \set{x, y, z}}$. 
	For $a\in \set{x,y,z}$, the directed path for $(c, t_1^{c,a})$ (or $(c, t_2^{c,a})$) is the unique directed path from $c$ to $t_1^{c,a}$ (or $t_2^{c,a}$) that contains $a$.
	Observe that vertices $\tilde{c}$ have the highest outdegree (3 in- and 8 out-edges).

	\begin{figure}[bt]  	\centering
	\includegraphics[page=2, scale=1]{bounded_deg_PL_reduction}
	\caption{Variable gadget (left) and clause gadget (right) for the proof of~\cref{thm:pfixed_deg_pathlen}.
	Orange colored vertices are shared between different gadgets.}
	\label{fig:pfixed_deg_pathlen}
	\end{figure}

	Suppose we have a YES-instance of \threeSATtwo. 
	Consider the graph $H$ such that 
	\begin{align*}
		E(H) &= \bigcup_{x\in X} E_x \cup \bigcup_{c\in C} E_c,\\
		E_x &= \begin{cases}
			\set{(s_x, \thickbar{x})} \cup \set{(s_x, t_i^x) \mid t\in [3]} \cup \set{(\thickbar{x}, t_i^{\thickbar{x}}) \mid t\in [4]},
				 \hphantom{,(x,t_4)}\quad \text{if $x$ is \True;}\\
			\set{(s_x, x), (x,t_4^{\thickbar{x}})} \cup \set{(x, t_i^x) \mid t\in [3]} \cup \set{(s_x, t_i^{\thickbar{x}}) \mid t\in [3]}, 
				\quad \text{else.}\\
			\end{cases}\\ 
		E_c &= \set{(a, t_1^a), (a, t_2^a) \mid a \in \set{c, c', c'', \tilde{c}}} 
		  \cup \set{(c,c'),(c',c''),(c,\tilde{c}),(c',x), (x,t_1^{c,x}), (x,t_2^{c,x})}\\
		 & \hphantom{= } \cup \set{(c'', t_1^{c,y}), (c'', t_2^{c,y}), (\tilde{c},t_1^{c,z}), (\tilde{c}, t_2^{c,z})}, 
	\end{align*} 
	where without loss of generality $x$ is a \True literal in the clause.
	Observe that $H$ is a subgraph of $\tc(D)$ and for every commodity $(s,t, P)\in K$ there is a directed $s$-$t$-path in $H\cap \tc(P)$.
	Further, the outdegree of vertices $t_i^x$ is 0 and the outdegree of vertices $s_x, c, c', c'',$ and $\tilde{c}$ is 4. 
	Each literal vertex ($x$ or $\thickbar{x}$) that is \False has 4 outdegree in $E_x$ and no outdegree in any $E_c$.
	Each literal vertex that is \True has no outdegree in $E_x$ and at most outdegree 2 in each of $E_{c}$, $E_{c'}$, the two clause gadgets it appears in.
	Thus, $H$ is a solution for target 4.

	Now suppose there is a solution graph $H$ for the \pfixed instance with target 4.
	Consider the gadget of a variable $x$ and partition the vertices in two sets $S_x = (s_x, x, \thickbar{x})$ and $T_x = \set{t_1^x,t_2^x,t_3^x,t_1^{\thickbar{x}},t_2^{\thickbar{x}},t_3^{\thickbar{x}},t_4^{\thickbar{x}}}$.
	We let $x$ be \True if and only if there are at most 2 edges $(x,t)\in H$ such that $t\in T_x \cup \{\thickbar{x}\}$.
	Note that in this case there are more than 2 edges $(\thickbar{x},t)\in H$ such that $t\in T_x$ as otherwise not all commodities in the gadget would be satisfied.
	This variable assignment from all variable gadget satisfies all clauses as follows.
	Consider any clause $c$ over the literals $x,y,z$. 
	To fulfill all commodities inside the gadget,
	we have $\set{(a, t_1^a), (a, t_2^a) \mid a \in \set{c, c', c'', \tilde{c}}}\subseteq E(H)$ as well as $(c,c'),(c',c'')\in E(H)$.
	This already gives $c$ and $c'$ an outdegree of 3 and $c''$ an outdegree of 2.
	Thus, there can be at most 4 more edges starting in $S_x$ to satisfy the 6 remaining commodities $(c,t), t\in T_x$.
	We can use one of the 4 edges to connect to $\tilde{c}$, which still has capacity for 2 not yet accounted edges. 
	Still one of the 6 remaining commodities is unsatisfied, for which the only remaining option is to use the vertices $x,y,$ or $z$.
	Suppose all literals in the clause were \False. 
	Then, in particular, $x,y,z$ have already used at least 3 of their 4 out-edges towards vertices in their respective variable gadget.
	As connecting $c$ to any of $x,y,$ or $z$ takes one edge each and only enables at most one edge each, it is not possible to satisfy all 6 remaining commodities.
	Therefore, at least one of $x,y,$ or $z$ uses at most 2 edges in its variable gadget, and therefore, it is \True and satisfies the clause.
\end{proof}

\section{Concluding Remarks and Discussion}
\label{sec:conc}
The complexity-theoretic behavior of both \pvar and \pfixed w.r.t.\ graph-structural parameterizations of is somewhat unique: there seem to be no structural parameterizations of $D$ alone which would yield \XP\ or fixed-parameter algorithms without bounding the input size. 
 Indeed, we achieve tractability in \cref{thm:deg_pathlen_tw} only by the addition of the path length as a parameter. 
Luckily, this parameter is well-motivated, as in practice virtually all commodities are routed via rather short paths in the fulfillment network due to its hierarchical nature. 
 
The main technical open question arising from our work is the classical complexity of \pfixed\ in the baseline case of $T=1$. Another aspect which we consider noteworthy is that \pfixed\ and \pvar\ seem to have the same parameterized complexity landscape: they are both \FPT\ w.r.t.\ $|K|$, para\NP-hard w.r.t.\ $T$, and appear indistinguishable from the perspective of structural parameterizations (see Table~\ref{table:results}). Is there a deeper connection between the two problems and could this be exploited to obtain unified algorithms for both problems which avoid, e.g., the heavy machinery of Ramsey's
 Theorem and color coding?  
 
\bibliography{refs} 

\end{document}